\providecommand{\U}[1]{\protect\rule{.1in}{.1in}}
\newtheorem{theorem}{Theorem}
\newtheorem{lemma}[theorem]{Lemma}
\newtheorem{remark}[theorem]{Remark}
\newenvironment{proof}[1][Proof]{\noindent\textbf{#1.} }{\ \rule{0.5em}{0.5em}}
\begin{document}

\title{\textbf{Empirical phi-divergence test-statistics for the equality of means of
two populations}}
\author{N. Balakrishnan$^{1}$, N. Mart\'{\i}n$^{2}$ and L. Pardo$^{3}$\\$^{1}${\small Department of Mathematics and Statistics, McMaster University,
Hamilton, Canada}\\$^{2}${\small Department of Statistics and O.R. II (Decision Methods),
Complutense University of Madrid, 28003 Madrid, Spain}\\$^{3}${\small Department of Statistics and O.R., Complutense University of
Madrid, 28040 Madrid, Spain}}
\maketitle

\begin{abstract}
Empirical phi-divergence test-statistics have demostrated to be a useful
technique for the simple null hypothesis to improve the finite sample
behaviour of the classical likelihood ratio test-statistic, as well as for
model misspecification problems, in both cases for the one population problem.
This paper introduces this methodology for two sample problems. A simulation
study illustrates situations in which the new test-statistics become a
competitive tool with respect to the classical z-test and the likelihood ratio test-statistic.

\end{abstract}

\bigskip\bigskip

\noindent\textbf{AMS 2001 Subject Classification: }62F03, 62F25.

\noindent\textbf{Keywords and phrases: }Empirical likelihood, Empirical
phi-divergence test statistics, Phi-divergence measures, Power function.

\section{Introduction\label{sec1}}

The method of likelihood introduced by Fisher is certainly one of the most
commonly used techniques for parametric models. The likelihood has been also
shown to be very useful in non-parametric context. More concretely Owen (1988,
1990, 1991) introduced the empirical likelihood ratio statistics for
non-parametric problems. Two sample problems are frequently encountered in
many areas of statistics, generally performed under the assumption of
normality. The most commonly used test in this connection is the two sample
$t$-test for the equality of means, performed under the assumption of equality
of variances. If the variances are unknown, we have the so-called
Behrens-Fisher problem. It is well-known that the two sample $t$-test has cone
major drawback; it is highly sensitive to deviations from the ideal
conditions, and may perform miserably under model misspecification and the
presence of outliers. Recently Basu et al. (2014) presented a new family of
test statistics to overcome the problem of non-robustness of the $t$-statistic.

Empirical likelihood methods for two-sample problems have been studied by
different researchers since Owen (1988) introduced the empirical likelihood as
a non-parametric likelihood-based alternative approach to inference on the
mean of a single population. The monograph of Owen (2001) is an excellent
overview of developments on empirical likelihood and considers a multi-sample
empirical likelihood theorem, which includes the two-sample problem as a
special case. Some important contributions for the two-sample problem are
given in Owen (1991), Adimiri (1995), Jin (1995), Qin (1994, 1998), Qin and
Zhao (2000), Zhang (2000), Liu et al. (2008), Baklizi and Kibria (2009), Wu
and Yan (2012) and references therein.

Consider two independent unidimensional random variables $X$ with unknown mean
$\mu_{1}$ and variance $\sigma_{1}^{2}$ and $Y$ with unknown mean $\mu_{2}$
and variance $\sigma_{2}^{2}$. Let $X_{1},...,X_{m}$ be a random sample of
size $m$ from the population denoted by $X$, with distribution function $F$,
and $Y_{1},...,Y_{n}$ be a random sample of size $n$ from the population
denoted by $Y$, with distribution function $G$. We shall assume that $F$ and
$G$ are unknown, therefore we are interested in a non-parametric approach,
more concretely we shall use empirical likelihood methods. If we denote
$\mu_{1}=\mu$ and $\mu_{2}=\mu+\delta$, our interest will be in testing
\begin{equation}
H_{0}\text{: }\delta=\delta_{0}\text{ vs. }H_{1}\text{: }\delta\neq\delta_{0},
\label{1}%
\end{equation}
being $\delta_{0}$ a known real number. Since $\delta=\mu_{2}-\mu_{1}$ becomes
the parameter of interest, apart from testing (\ref{1}), we might also be
interested in constructing the confidence interval for $\delta$.

In this paper we are going to introduce a new family of empirical test
statistics for the two-sample problem introduced in (\ref{1}): Empirical
phi-divergence test statistics. This family of test statistics is based on
phi-divergence measures and it contains the empirical log-likelihood ratio
test statistic as a particular case. In this sense, we can think that the
family of empirical phi-divergence test statistics presented and studied in
this paper is a generalization of the empirical log-likelihood ratio statistic.

Let $N=m+n$, assume that%
\begin{equation}
\frac{m}{N}\underset{m,n\rightarrow\infty}{\longrightarrow}\nu\in\left(
0,1\right)  , \label{ass}%
\end{equation}
and $x_{1},...,x_{m}$, $y_{1},...,y_{n}$ a realization of $X_{1},...,X_{m}$,
$Y_{1},...,Y_{n}$. We denote%
\[
\mathcal{L}\left(  \delta\right)  =%
{\textstyle\prod\limits_{i=1}^{m}}
{\textstyle\prod\limits_{j=1}^{n}}
p_{i}q_{j}\text{ s.t. }%
{\textstyle\sum\limits_{i=1}^{m}}
p_{i}(x_{i}-\mu)=0\text{, }%
{\textstyle\sum\limits_{i=1}^{m}}
p_{i}=1\text{, }%
{\textstyle\sum\limits_{j=1}^{n}}
q_{j}(y_{j}-\mu-\delta)=0\text{, }%
{\textstyle\sum\limits_{j=1}^{n}}
q_{j}=1\text{,}%
\]
and
\[
\mathcal{L}\left(  \boldsymbol{p},\boldsymbol{q}\right)  =%
{\textstyle\prod\limits_{i=1}^{m}}
{\textstyle\prod\limits_{j=1}^{n}}
p_{i}q_{j}\text{ s.t. }%
{\textstyle\sum\limits_{i=1}^{m}}
p_{i}=%
{\textstyle\sum\limits_{j=1}^{n}}
q_{j}=1\text{, }p_{i},q_{j}\geq0\text{,}%
\]
with $p_{i}=p_{i}(\mu)=F\left(  x_{i}\right)  -F\left(  x_{i}^{-}\right)  $,
$q_{j}=q_{j}(\mu,\delta)=G\left(  y_{j}\right)  -G(y_{j}^{-})$ and
$\boldsymbol{p}=\left(  p_{1},...,p_{m}\right)  ^{T}$, $\boldsymbol{q}=\left(
q_{1},...,q_{n}\right)  ^{T}$.

The empirical log-likelihood ratio statistic for testing (\ref{1}) is given
by
\begin{equation}
\mathcal{\ell}\left(  \delta_{0}\right)  =-2\log\frac{\sup_{\boldsymbol{p}%
,\boldsymbol{q}}\mathcal{L}\left(  \delta_{0}\right)  }{\sup_{\boldsymbol{p}%
,\boldsymbol{q}}\mathcal{L}\left(  \boldsymbol{p},\boldsymbol{q}\right)  }.
\label{C}%
\end{equation}
Using the standard Lagrange multiplier method we might obtain $\sup
_{\boldsymbol{p},\boldsymbol{q}}\mathcal{L}\left(  \delta_{0}\right)  $, as
well as $\sup_{\boldsymbol{p},\boldsymbol{q}}\mathcal{L}\left(  \boldsymbol{p}%
,\boldsymbol{q}\right)  $. For $\sup_{\boldsymbol{p},\boldsymbol{q}%
}\mathcal{L}\left(  \delta_{0}\right)  $, taking derivatives on
\[
\mathcal{L}_{1}=%
{\textstyle\sum\limits_{i=1}^{m}}
\log p_{i}+%
{\textstyle\sum\limits_{j=1}^{n}}
\log q_{j}+s_{1}\left(  1-%
{\textstyle\sum\limits_{i=1}^{m}}
p_{i}\right)  +s_{2}\left(  1-%
{\textstyle\sum\limits_{j=1}^{n}}
q_{j}\right)  -\lambda_{1}m%
{\textstyle\sum\limits_{i=1}^{m}}
p_{i}(x_{i}-\mu)-\lambda_{2}n%
{\textstyle\sum\limits_{j=1}^{n}}
q_{j}(y_{j}-\mu-\delta_{0}),
\]
we obtain
\begin{align}
\frac{\partial\mathcal{L}_{1}}{\partial p_{i}}  &  =0\Leftrightarrow
p_{i}=\frac{1}{m}\frac{1}{1+\lambda_{1}\left(  x_{i}-\mu\right)  },\text{
}i=1,...,m,\label{2}\\
\frac{\partial\mathcal{L}_{1}}{\partial q_{j}}  &  =0\Leftrightarrow
q_{j}=\frac{1}{n}\frac{1}{1+\lambda_{2}\left(  y_{j}-\mu-\delta_{0}\right)
},\text{ }j=1,...,n, \label{3}%
\end{align}
and
\[
\frac{\partial\mathcal{L}_{1}}{\partial\mu}=0\Leftrightarrow m\lambda
_{1}+n\lambda_{2}=0.
\]
Therefore, the empirical maximum likelihood estimates $\widetilde{\lambda}%
_{1}$, $\widetilde{\lambda}_{2}$ and $\widetilde{\mu}$ of $\lambda_{1}$,
$\lambda_{2}$ and $\mu$, under $H_{0}$, are obtained as the solution of the
equations%
\begin{equation}
\left\{
\begin{array}
[c]{l}%
\dfrac{1}{m}%
{\textstyle\sum\limits_{i=1}^{m}}
\frac{1}{1+\lambda_{1}\left(  x_{i}-\mu\right)  }=1\\
\dfrac{1}{n}%
{\textstyle\sum\limits_{j=1}^{n}}
\frac{1}{1+\lambda_{2}\left(  y_{j}-\mu-\delta_{0}\right)  }=1\\
m\lambda_{1}+n\lambda_{2}=0
\end{array}
\right.  , \label{3bis}%
\end{equation}
and%
\begin{equation}
\log\sup_{\boldsymbol{p},\boldsymbol{q}}\mathcal{L}\left(  \delta_{0}\right)
=-m\log m-%
{\textstyle\sum\limits_{i=1}^{m}}
\log\left(  1+\widetilde{\lambda}_{1}\left(  x_{i}-\widetilde{\mu}\right)
\right)  -n\log n-%
{\textstyle\sum\limits_{j=1}^{n}}
\log\left(  1+\widetilde{\lambda}_{2}\left(  y_{j}-\widetilde{\mu}-\delta
_{0}\right)  \right)  . \label{4}%
\end{equation}
In relation $\sup_{\boldsymbol{p},\boldsymbol{q}}\mathcal{L}\left(
\boldsymbol{p},\boldsymbol{q}\right)  $, taking derivatives on%
\begin{equation}
\mathcal{L}_{2}=%
{\textstyle\sum\limits_{i=1}^{m}}
\log p_{i}+%
{\textstyle\sum\limits_{j=1}^{n}}
\log q_{j}+s_{1}\left(  1-%
{\textstyle\sum\limits_{i=1}^{m}}
p_{i}\right)  +s_{2}\left(  1-%
{\textstyle\sum\limits_{j=1}^{n}}
q_{j}\right)  , \label{5}%
\end{equation}
we have
\begin{equation}
\frac{\partial\mathcal{L}_{2}}{\partial p_{i}}=0\Leftrightarrow p_{i}=\frac
{1}{m}\text{, }i=1,...,m\text{ and }\frac{\partial\mathcal{L}_{2}}{\partial
q_{j}}=0\Leftrightarrow q_{j}=\frac{1}{n}\text{, }j=1,...,n, \label{6}%
\end{equation}
and%
\begin{equation}
\log\sup_{\boldsymbol{p},\boldsymbol{q}}\mathcal{L}\left(  \boldsymbol{p}%
,\boldsymbol{q}\right)  =-m\log m-n\log n. \label{7}%
\end{equation}
Therefore, the empirical log-likelihood ratio statistic (\ref{C}), for testing
(\ref{1}), can be written as
\begin{align}
\mathcal{\ell}\left(  \delta_{0}\right)   &  =-2\left(  -m\log m-%
{\textstyle\sum\limits_{i=1}^{m}}
\log\left(  1+\widetilde{\lambda}_{1}\left(  X_{i}-\widetilde{\mu}\right)
\right)  -n\log n\right. \nonumber\\
&  \left.  -%
{\textstyle\sum\limits_{j=1}^{n}}
\log\left(  1+\widetilde{\lambda}_{2}\left(  Y_{j}-\widetilde{\mu}-\delta
_{0}\right)  \right)  +m\log m+n\log n\right) \nonumber\\
&  =2\left\{
{\textstyle\sum\limits_{i=1}^{m}}
\log\left(  1+\widetilde{\lambda}_{1}\left(  X_{i}-\widetilde{\mu}\right)
\right)  +%
{\textstyle\sum\limits_{j=1}^{n}}
\log\left(  1+\widetilde{\lambda}_{2}\left(  Y_{j}-\widetilde{\mu}-\delta
_{0}\right)  \right)  \right\}  . \label{8}%
\end{align}
Under some regularity conditions, Jing (1995) established that
\[
\Pr\left(  \mathcal{\ell}\left(  \delta_{0}\right)  >\chi_{1,\alpha}%
^{2}\right)  =\alpha+O(n^{-1}),
\]
where $\chi_{1,\alpha}^{2}$ is the $100(1-\alpha)$-th percentile of the
$\chi_{1}^{2}$ distribution.

Our interest in this paper is to study the problem of testing given in
(\ref{1}) and at the same time to construct confidence intervals for $\delta$
on the basis of the empirical phi-divergence test statistics. Empirical
phi-divergence test statistics in the context of the empirical likelihood have
studied by Baggerly (1998), Broniatowski and Keizou (2012), Balakhrishnan et
al. (2013), Felipe et al. (2015) and references therein. The family of
empirical phi-divergence test statistics, considered in this paper, contains
the classical empirical log-likelihood ratio statistic as a particular case.
In Section \ref{sec2}, the empirical phi-divergence test statistics are
introduced and the corresponding asymptotic distributions are obtained. A
simulation study is carried out in Section \ref{sec4}. Section \ref{sec3} is
devoted to develop a numerical example. In Section \ref{sec5} the previous
results, devoted to univariate populations, are extended to $k$-dimensional populations.

\section{Empirical phi-divergence test statistics\label{sec2}}

For the hypothesis testing considered in (\ref{1}), in this section the family
of empirical phi-divergence test statistics are introduced as a natural
extension of the empirical log-likelihood ratio statistic given in (\ref{C}).

We consider the $N$-dimensional probability vectors
\begin{equation}
\boldsymbol{U}=(\tfrac{1}{N},\overset{\underset{\smile}{N}}{...},\tfrac{1}%
{N})^{T} \label{9}%
\end{equation}
and
\begin{equation}
\boldsymbol{P=}\left(  p_{1}\nu,...,p_{m}\nu,q_{1}(1-\nu),...,q_{n}%
(1-\nu)\right)  ^{T} \label{10}%
\end{equation}
where $p_{i},$ $i=1,...,m$, $q_{j},$ $j=1,...,n$ were defined in (\ref{2}) and
(\ref{3}), respectively, and $\nu$\ in (\ref{ass}). Let
$\widetilde{\boldsymbol{P}}$ be the $N$-dimensional vector obtained from
$\boldsymbol{P}$ with $p_{i}$, $q_{j}$\ replaced by the corresponding
empirical maximum likelihood estimators $\widetilde{p}_{i}$, $\widetilde{q}%
_{j}$ and $\nu$ by $\frac{m}{N}$. The Kullback-Leibler divergence between the
probability vectors $\boldsymbol{U}$ and $\widetilde{\boldsymbol{P}}$ is given
by
\begin{align*}
D_{Kullback}(\boldsymbol{U},\widetilde{\boldsymbol{P}})  &  =%
{\textstyle\sum\limits_{i=1}^{m}}
\frac{1}{N}\log\frac{\frac{1}{N}}{\widetilde{p}_{i}\frac{m}{N}}+%
{\textstyle\sum\limits_{j=1}^{n}}
\frac{1}{N}\log\frac{\frac{1}{N}}{\widetilde{q}_{j}\left(  1-\frac{m}%
{N}\right)  }\\
&  =-\frac{1}{N}\left\{
{\textstyle\sum\limits_{i=1}^{m}}
\log m\widetilde{p}+%
{\textstyle\sum\limits_{j=1}^{n}}
\log n\widetilde{q}_{j}\right\} \\
&  =\frac{1}{N}\left\{
{\textstyle\sum\limits_{i=1}^{m}}
\log\left(  1+\widetilde{\lambda}_{1}\left(  x_{i}-\widetilde{\mu}\right)
\right)  +%
{\textstyle\sum\limits_{j=1}^{n}}
\log\left(  1+\widetilde{\lambda}_{2}\left(  y_{j}-\widetilde{\mu}-\delta
_{0}\right)  \right)  \right\}  ,
\end{align*}
where%
\begin{align}
\widetilde{p}_{i}  &  =\frac{1}{m}\frac{1}{1+\widetilde{\lambda}_{1}\left(
x_{i}-\widetilde{\mu}\right)  },\text{ }i=1,...,m,\label{p}\\
\widetilde{q}_{j}  &  =\frac{1}{n}\frac{1}{1+\widetilde{\lambda}_{2}\left(
y_{j}-\widetilde{\mu}-\delta_{0}\right)  },\text{ }j=1,...,n. \label{q}%
\end{align}
Therefore, the relationship between $\mathcal{\ell}\left(  \delta_{0}\right)
$ and $D_{Kullback}(\boldsymbol{U},\widetilde{\boldsymbol{P}})$ is%
\begin{equation}
\mathcal{\ell}\left(  \delta_{0}\right)  =2ND_{Kullback}(\boldsymbol{U}%
,\widetilde{\boldsymbol{P}}). \label{11}%
\end{equation}
Based on (\ref{11}), in this paper the empirical phi-divergence test
statistics for (\ref{1}) are introduced for the first time. This family of
empirical phi-divergence test statistics is obtained replacing the
Kullback-Leibler divergence by a phi-divergence measure in (\ref{11}), i.e.,
\begin{equation}
T_{\phi}\left(  \delta_{0}\right)  =\frac{2N}{\phi^{\prime\prime}(1)}D_{\phi
}(\boldsymbol{U},\widetilde{\boldsymbol{P}}), \label{11BIS}%
\end{equation}
where
\[
D_{\phi}(\boldsymbol{U},\widetilde{\boldsymbol{P}})=%
{\textstyle\sum\limits_{i=1}^{m}}
\widetilde{p}_{i}\frac{m}{N}\phi\left(  \frac{\frac{1}{N}}{\widetilde{p}%
_{i}\frac{m}{N}}\right)  +%
{\textstyle\sum\limits_{j=1}^{n}}
\left(  1-\frac{m}{N}\right)  \widetilde{q}_{j}\phi\left(  \frac{\frac{1}{N}%
}{\left(  1-\frac{m}{N}\right)  \widetilde{q}_{j}}\right)  ,
\]
with $\phi:\mathbb{R}^{+}\longrightarrow\mathbb{R}$ being any convex function
such that at $x=1$, $\phi\left(  1\right)  =0$, $\phi^{\prime\prime}\left(
1\right)  >0$ and at $x=0$, $0\phi\left(  0/0\right)  =0$ and $0\phi\left(
p/0\right)  =p\lim_{u\rightarrow\infty}\frac{\phi\left(  u\right)  }{u}$. For
more details see Cressie and Pardo (2002) and Pardo (2006). Therefore,
(\ref{11BIS}) can be rewritten as%
\begin{align}
T_{\phi}\left(  \delta_{0}\right)   &  =\frac{2}{\phi^{\prime\prime}%
(1)}\left\{
{\textstyle\sum\limits_{i=1}^{m}}
m\widetilde{p}_{i}\phi\left(  \frac{1}{m\widetilde{p}_{i}}\right)  +%
{\textstyle\sum\limits_{j=1}^{n}}
n\widetilde{q}_{j}\phi\left(  \frac{1}{n\widetilde{q}_{j}}\right)  \right\}
\label{12}\\
&  =\frac{2}{\phi^{\prime\prime}(1)}\left\{
{\textstyle\sum\limits_{i=1}^{m}}
\frac{1}{1+\widetilde{\lambda}_{1}\left(  x_{i}-\widetilde{\mu}\right)  }%
\phi\left(  1+\widetilde{\lambda}_{1}\left(  x_{i}-\widetilde{\mu}\right)
\right)  +%
{\textstyle\sum\limits_{j=1}^{n}}
\frac{1}{1+\widetilde{\lambda}_{2}\left(  y_{j}-\widetilde{\mu}-\delta
_{0}\right)  }\phi\left(  1+\widetilde{\lambda}_{2}\left(  y_{j}-\left(
\widetilde{\mu}+\delta_{0}\right)  \right)  \right)  \right\}  .\nonumber
\end{align}
If $\phi(x)=x\log x-x+1$ is chosen in $D_{\phi}(\boldsymbol{U}%
,\widetilde{\boldsymbol{P}})$, we get the Kullback-Leibler divergence and
$T_{\phi}\left(  \delta_{0}\right)  $ coincides with the empirical
log-likelihood ratio statistic $\mathcal{\ell}\left(  \delta_{0}\right)  $
given in (\ref{11}).

Let $\widehat{\mu}(\sigma_{1}^{2},\sigma_{2}^{2})$ be the optimal estimator of
$\mu$ under the assumption of having the known values of $\sigma_{1}^{2}$,
$\sigma_{2}^{2}$, i.e. it is given by the shape $\pi\overline{X}%
+(1-\pi)\overline{Y}$ and has minimum variance. It is well-known that%
\begin{equation}
\widehat{\mu}(\sigma_{1}^{2},\sigma_{2}^{2})=\dfrac{\dfrac{m\overline{X}%
}{\sigma_{1}^{2}}+n\dfrac{\left(  \overline{Y}-\delta_{0}\right)  }{\sigma
_{2}^{2}}}{\dfrac{m}{\sigma_{1}^{2}}+\dfrac{n}{\sigma_{2}^{2}}}. \label{muHat}%
\end{equation}
Similarly, an asymptotically optimal estimator of $\mu$ having unknown values
of $\sigma_{1}^{2}$, $\sigma_{2}^{2}$, is given by%
\[
\widehat{\mu}(S_{1}^{2},S_{2}^{2})=\dfrac{\dfrac{m\overline{X}}{S_{1}^{2}%
}+n\dfrac{\left(  \overline{Y}-\delta_{0}\right)  }{S_{1}^{2}}}{\dfrac
{m}{S_{1}^{2}}+\dfrac{n}{S_{1}^{2}}},
\]
where $S_{1}^{2}=\frac{1}{m-1}\sum_{i=1}^{m}(X_{i}-\overline{X})^{2}$,
$S_{2}^{2}=\frac{1}{n-1}\sum_{j=1}^{n}(Y_{j}-\overline{Y})^{2}$ are consistent
estimators of $\sigma_{1}^{2}$, $\sigma_{2}^{2}$ respectively. In the
following lemma an important relationship is established, useful to get the
asymptotic distribution of $T_{\phi}\left(  \delta_{0}\right)  $.

\begin{lemma}
\label{Lem}Let $\widetilde{\mu}$ the empirical likelihood estimator of $\mu$.
Then, we have
\[
\widetilde{\mu}=\widehat{\mu}(\sigma_{1}^{2},\sigma_{2}^{2})+O_{p}%
(1)=\widehat{\mu}(S_{1}^{2},S_{2}^{2})+O_{p}(1).
\]

\end{lemma}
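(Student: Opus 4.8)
The plan is to treat the three equations in (\ref{3bis}) as the estimating equations that jointly define $\widetilde{\mu}$, $\widetilde{\lambda}_1$ and $\widetilde{\lambda}_2$, and to extract $\widetilde{\mu}$ by a first-order expansion in the Lagrange multipliers, exactly as in the single-population empirical-likelihood analysis of Owen (1990). First I would record the preliminary stochastic orders: under $H_0$ both $\overline{X}-\mu$ and $\overline{Y}-\delta_0-\mu$ are $O_p(N^{-1/2})$, and by the standard empirical-likelihood argument (bounding $\max_i|x_i-\mu|$ via $\max_i|x_i|=o_p(m^{1/2})$ and using the mean constraint) one gets $\widetilde{\lambda}_1=O_p(m^{-1/2})$ and $\widetilde{\lambda}_2=O_p(n^{-1/2})$, together with the consistency $\widetilde{\mu}-\mu=O_p(N^{-1/2})$. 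These rates are what make every Taylor expansion below legitimate.

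Next, I would rewrite the first equation of (\ref{3bis}) in its equivalent mean-constraint form $\frac{1}{m}\sum_{i=1}^m\frac{x_i-\widetilde{\mu}}{1+\widetilde{\lambda}_1(x_i-\widetilde{\mu})}=0$, which follows from the identity $\frac{t}{1+\lambda t}=\frac{1}{\lambda}\bigl(1-\frac{1}{1+\lambda t}\bigr)$, and then expand $\frac{1}{1+\widetilde{\lambda}_1(x_i-\widetilde{\mu})}=1-\widetilde{\lambda}_1(x_i-\widetilde{\mu})+O_p(\widetilde{\lambda}_1^2(x_i-\widetilde{\mu})^2)$. This yields $\overline{X}-\widetilde{\mu}=\widetilde{\lambda}_1\,\frac{1}{m}\sum_{i=1}^m(x_i-\widetilde{\mu})^2+O_p(m^{-1})$, hence $\widetilde{\lambda}_1=(\overline{X}-\widetilde{\mu})/S_1^2+O_p(m^{-1})$ after replacing $\frac{1}{m}\sum_{i=1}^m(x_i-\widetilde{\mu})^2$ by $S_1^2$ (equivalently by $\sigma_1^2$) up to lower-order terms. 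The identical manipulation applied to the second equation gives $\widetilde{\lambda}_2=(\overline{Y}-\delta_0-\widetilde{\mu})/S_2^2+O_p(n^{-1})$.

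Finally, I would substitute these two expressions into the orthogonality relation $m\widetilde{\lambda}_1+n\widetilde{\lambda}_2=0$, obtaining $\frac{m(\overline{X}-\widetilde{\mu})}{\sigma_1^2}+\frac{n(\overline{Y}-\delta_0-\widetilde{\mu})}{\sigma_2^2}=O_p(1)$, where the remainders $m\cdot O_p(m^{-1})$ and $n\cdot O_p(n^{-1})$ are each $O_p(1)$. Solving this linear equation for $\widetilde{\mu}$ — the coefficient $\frac{m}{\sigma_1^2}+\frac{n}{\sigma_2^2}$ being of order $N$ — reproduces exactly $\widehat{\mu}(\sigma_1^2,\sigma_2^2)$ of (\ref{muHat}) up to a remainder of order $O_p(N^{-1})$, which is certainly within the asserted $O_p(1)$. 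Because $S_1^2\overset{p}{\to}\sigma_1^2$ and $S_2^2\overset{p}{\to}\sigma_2^2$, the same computation carried out with $S_1^2,S_2^2$ in place of $\sigma_1^2,\sigma_2^2$ delivers the second equality $\widetilde{\mu}=\widehat{\mu}(S_1^2,S_2^2)+O_p(1)$. The main obstacle is the preliminary step: rigorously justifying the uniform validity of the Taylor expansions and the rate $\widetilde{\lambda}_i=O_p(n^{-1/2})$ in the \emph{coupled} two-sample system, where the two multipliers are tied together through the common nuisance parameter $\mu$ and the constraint $m\lambda_1+n\lambda_2=0$; once those orders are in hand, the remaining steps are the bookkeeping sketched above.
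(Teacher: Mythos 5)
Your proposal is correct and takes essentially the same route as the paper: expand the Lagrange multipliers as $\widetilde{\lambda}_{1}=\sigma_{1}^{-2}\left(\overline{X}-\widetilde{\mu}\right)+O_{p}(m^{-1})$ and $\widetilde{\lambda}_{2}=\sigma_{2}^{-2}\left(\overline{Y}-\widetilde{\mu}-\delta_{0}\right)+O_{p}(n^{-1})$, substitute into $m\widetilde{\lambda}_{1}+n\widetilde{\lambda}_{2}=0$, and solve the resulting linear equation for $\widetilde{\mu}$, which is exactly the paper's argument. The only differences are that you derive the multiplier expansions explicitly by Taylor-expanding the estimating equations (and honestly flag the rate justification in the coupled system as the remaining gap), whereas the paper outsources this step to a citation of Hall and La Scala (1990); the sign discrepancy with the expansion as printed in the paper is immaterial, since the constraint equation is homogeneous and both versions lead to the same equation for $\widetilde{\mu}$.
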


\begin{proof}
See Appendix \ref{Ap1}.
\end{proof}

\begin{theorem}
\label{Th1}Suppose that $\sigma_{1}^{2}<\infty$, $\sigma_{2}^{2}<\infty$ and
(\ref{ass}). Then,%
\[
T_{\phi}\left(  \delta_{0}\right)  \underset{n,m\rightarrow\infty
}{\overset{\mathcal{L}}{\rightarrow}}\chi_{1}^{2}.
\]

\end{theorem}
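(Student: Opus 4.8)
The plan is to show that $T_{\phi}(\delta_0)$ differs from the empirical log-likelihood ratio statistic $\ell(\delta_0)$ of (\ref{8}) by a term of order $o_p(1)$, and then to invoke the result of Jing (1995) quoted above together with Slutsky's theorem. Since both statistics are built from the same quantities $\widetilde{\lambda}_1$, $\widetilde{\lambda}_2$, $\widetilde{\mu}$, it suffices to compare them termwise. Writing $a_i=\widetilde{\lambda}_1(X_i-\widetilde{\mu})$ and $b_j=\widetilde{\lambda}_2(Y_j-\widetilde{\mu}-\delta_0)$, relation (\ref{8}) reads $\ell(\delta_0)=2\sum_i\log(1+a_i)+2\sum_j\log(1+b_j)$, whereas by (\ref{12}) we have $T_{\phi}(\delta_0)=\tfrac{2}{\phi''(1)}\sum_i\tfrac{\phi(1+a_i)}{1+a_i}+\tfrac{2}{\phi''(1)}\sum_j\tfrac{\phi(1+b_j)}{1+b_j}$.

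First I would record the standard empirical-likelihood order bounds. Because the phi-divergence $D_\phi(\boldsymbol{U},\widetilde{\boldsymbol{P}})$ is invariant under replacing $\phi(x)$ by $\phi(x)+c(x-1)$ (the added linear part contributes $c(\sum_i(\tfrac1N-\widetilde{p}_i\tfrac mN)+\sum_j(\tfrac1N-\widetilde{q}_j\tfrac nN))=0$ by the constraints $\sum_i\widetilde{p}_i=\sum_j\widetilde{q}_j=1$), we may assume without loss of generality that $\phi'(1)=0$. Using Lemma \ref{Lem} together with the estimating equations (\ref{3bis}), one shows as in Owen (1990, 2001) that $\widetilde{\lambda}_1=O_p(N^{-1/2})$ and $\widetilde{\lambda}_2=O_p(N^{-1/2})$, while the finiteness of $\sigma_1^2$ and $\sigma_2^2$ yields the tail bounds $\max_i|X_i-\widetilde{\mu}|=o_p(N^{1/2})$ and $\max_j|Y_j-\widetilde{\mu}-\delta_0|=o_p(N^{1/2})$. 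Consequently $\max_i|a_i|=o_p(1)$, $\max_j|b_j|=o_p(1)$, and $\sum_i a_i^2+\sum_j b_j^2=O_p(1)$.

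Next I would Taylor-expand both summands about $a=0$. With $\phi(1)=0$ and $\phi'(1)=0$ one has $\tfrac{1}{\phi''(1)}\tfrac{\phi(1+a)}{1+a}=\tfrac12 a^2+O(a^3)$, while $\log(1+a)=a-\tfrac12 a^2+O(a^3)$. The first-order terms in $\ell(\delta_0)$ are absorbed by the estimating equation $\sum_i\tfrac{a_i}{1+a_i}=0$ (which follows from (\ref{p}) and the constraint $\sum_i\widetilde{p}_i(X_i-\widetilde{\mu})=0$), since multiplying it out gives $\sum_i a_i=\sum_i a_i^2+R$ with $R$ controlled by $\sum_i|a_i|^3$, and likewise for the $b_j$. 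Both expansions then collapse to the same leading quadratic form, so that $T_{\phi}(\delta_0)=\sum_i a_i^2+\sum_j b_j^2+o_p(1)=\ell(\delta_0)+o_p(1)$.

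The main obstacle is controlling the cubic remainders uniformly, namely showing $\sum_i|a_i|^3=o_p(1)$ and $\sum_j|b_j|^3=o_p(1)$; this is settled by the estimate $\sum_i|a_i|^3\le(\max_i|a_i|)\sum_i a_i^2=o_p(1)\cdot O_p(1)=o_p(1)$ from the second paragraph, and it is exactly here that the tail control $\max_i|X_i-\widetilde{\mu}|=o_p(N^{1/2})$ afforded by the second-moment assumption is indispensable. Once the equivalence $T_\phi(\delta_0)=\ell(\delta_0)+o_p(1)$ is in hand, Slutsky's theorem together with Jing's result $\ell(\delta_0)\overset{\mathcal{L}}{\rightarrow}\chi_1^2$ gives $T_\phi(\delta_0)\overset{\mathcal{L}}{\rightarrow}\chi_1^2$, as required.
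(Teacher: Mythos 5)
Your proposal is sound, but it reaches the $\chi_{1}^{2}$ limit by a genuinely different route than the paper. You prove the asymptotic equivalence $T_{\phi}(\delta_{0})=\mathcal{\ell}(\delta_{0})+o_{p}(1)$ --- both statistics collapse, after Taylor expansion, to the common quadratic form $\sum_{i}a_{i}^{2}+\sum_{j}b_{j}^{2}$ in your notation --- and then transfer the known limit law for $\mathcal{\ell}(\delta_{0})$ to $T_{\phi}(\delta_{0})$ by Slutsky. The paper instead keeps expanding: after the same kind of Taylor step (its (\ref{15})--(\ref{16})), it uses the explicit approximation $\lambda_{1}(\mu)=-\sigma_{1}^{-2}(\overline{X}-\mu)+O_{p}(m^{-1})$ and Lemma \ref{Lem} to identify the quadratic form with the squared $z$-statistic, i.e. $T_{\phi}(\delta_{0})=\frac{mn}{m+n}\left(\overline{X}-\overline{Y}+\delta_{0}\right)^{2}/\left(\tfrac{m}{m+n}\sigma_{2}^{2}+\tfrac{n}{m+n}\sigma_{1}^{2}\right)+o_{p}(1)$, and finishes with the central limit theorem; that argument is self-contained (no appeal to Jing) and makes transparent the first-order equivalence with $t(\delta_{0})$ exploited later in the simulation section. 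Your route is shorter and isolates a fact of independent interest, namely that every member of the phi-divergence family is first-order equivalent to the empirical log-likelihood ratio statistic; moreover, your normalization $\phi^{\prime}(1)=0$ via the invariance $\phi(x)\mapsto\phi(x)+c(x-1)$ is cleaner than the paper's treatment, which silently drops the $\phi^{\prime}(1)W_{i}$ term (a step that is justified only because $\sum_{i}W_{i}/(1+W_{i})=0$ by the normalization constraint behind (\ref{3bis}), a point the paper never states). One caveat you should repair: the Jing (1995) statement quoted in the paper is the refined coverage-error expansion $\Pr(\mathcal{\ell}(\delta_{0})>\chi_{1,\alpha}^{2})=\alpha+O(n^{-1})$, valid ``under some regularity conditions'' that are in general stronger than mere finiteness of $\sigma_{1}^{2},\sigma_{2}^{2}$ (such expansions require higher moments); your argument only needs the first-order Wilks-type conclusion $\mathcal{\ell}(\delta_{0})\overset{\mathcal{L}}{\rightarrow}\chi_{1}^{2}$, which does hold under second moments alone (Jing 1995; Owen 1991, 2001), so you should cite that weaker statement rather than the coverage-error result, or else your hypotheses silently exceed those of Theorem \ref{Th1}.
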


\begin{proof}
See Appendix \ref{Ap2}.
\end{proof}

\begin{remark}
\label{R3}A $(1-\alpha)$-level confidence interval on $\delta$ can be
constructed as
\[
CI_{1-\alpha}(\delta)=\left\{  \delta:T_{\phi}\left(  \delta\right)  \leq
\chi_{1,\alpha}^{2}\right\}  .
\]
The lower and upper bounds of the interval $CI_{1-\alpha}(\delta)$ require a
bisection search algorithm. This is a computationally challenging task,
because for every selected grid point on $\delta$, one needs to maximize the
empirical phi-divergence $T_{\phi}\left(  \delta\right)  $ over the nuisance
parameter, $\mu$, and there is no closed-form solution to the maximum point
$\widehat{\mu}$ for any given $\delta$. The computational difficulties under
the standard two-sample empirical likelihood formulation are due to the fact
that the involved Lagrange multipliers, which are determined through the set
of equations (\ref{3bis}), have to be computed based on two separate samples
with an added nuisance parameter $\mu$. Such difficulties can be avoided
through an alternative formulation of the empirical likelihood function, for
which computation procedures are virtually identical to those for one-sample
of size $N=m+n$\ empirical likelihood problems. Through the transformations%
\begin{align*}
\boldsymbol{v}_{i}  &  =\left(  1-\omega_{1},\frac{x_{i}}{\omega_{1}}%
-\delta\right)  ^{T},\text{ }i=1,...,m,\\
\boldsymbol{w}_{j}  &  =\left(  -\omega_{1},\frac{y_{j}}{\omega_{2}}%
-\delta\right)  ^{T},\text{ }j=1,...,n,\\
\omega_{1}  &  =\omega_{1}=\frac{1}{2},
\end{align*}
(\ref{p}) and (\ref{q}) can be alternatively obtained as%
\begin{align}
\widetilde{p}_{i}  &  =\frac{1}{m}\frac{1}{1+\widetilde{\boldsymbol{\lambda}%
}_{\ast}^{T}\boldsymbol{v}_{i}},\text{ }i=1,...,m,\label{p2}\\
\widetilde{q}_{j}  &  =\frac{1}{n}\frac{1}{1+\widetilde{\boldsymbol{\lambda}%
}_{\ast}^{T}\boldsymbol{w}_{j}},\text{ }j=1,...,n, \label{q2}%
\end{align}
where the estimates of the Lagrange multipliers
$\widetilde{\boldsymbol{\lambda}}_{\ast}\boldsymbol{=}(\widetilde{\lambda
}_{1,\ast},\widetilde{\lambda}_{2,\ast})^{T}$ are the solution in
$\boldsymbol{\lambda}_{\ast}$\ of
\[%
{\textstyle\sum\limits_{i=1}^{m}}
\frac{\boldsymbol{v}_{i}}{1+\boldsymbol{\lambda}_{\ast}^{T}\boldsymbol{v}_{i}%
}+%
{\textstyle\sum\limits_{j=1}^{n}}
\frac{\boldsymbol{w}_{j}}{1+\boldsymbol{\lambda}_{\ast}^{T}\boldsymbol{w}_{j}%
}=\boldsymbol{0}_{2}.
\]

\end{remark}

\begin{remark}
\label{R4}In the particular case that $m=n$, the two samples might be
understood as a random sample of size $n$ from a unique bidimensional
population. In this setting the two sample problem can be considered to be a
particular case of Balakrishnan et al. (2015).
\end{remark}

\begin{remark}
\label{R5}Fu et al. (2009), Yan (2010) and Wu and Yan (2012) pointed out that
empirical log-likelihood ratio statistic, $\mathcal{\ell}\left(  \delta
_{0}\right)  $, given in (\ref{8}) for testing (\ref{1}), does not perform
well when the distribution associated to the samples are quite skewed or
samples sizes are not large or sample sizes from each population are quite
different. To overcome this problem Fu et al. (2009) considered the weighted
empirical log-likelihood function defined by%
\begin{equation}
\mathcal{\ell}_{w}\left(  \boldsymbol{p},\boldsymbol{q}\right)  =\frac
{\omega_{1}}{m}%
{\textstyle\sum\limits_{i=1}^{m}}
\log p_{i}+\frac{\omega_{2}}{n}%
{\textstyle\sum\limits_{j=1}^{n}}
\log q_{j}, \label{r1}%
\end{equation}
with $\omega_{1}=\omega_{2}=\frac{1}{2}$, and obtained the weighted empirical
likelihood (WEL) estimator as well as the weighted empirical log-likelihood
ratio statistic. In order to get the WEL estimator, it is necessary to
maximize (\ref{r1}) subject to%
\begin{align}
&
{\textstyle\sum\limits_{i=1}^{m}}
p_{i}=%
{\textstyle\sum\limits_{j=1}^{n}}
q_{j}=1,\label{r2}\\
&
{\textstyle\sum\limits_{i=1}^{m}}
p_{i}x_{i}-%
{\textstyle\sum\limits_{j=1}^{n}}
y_{j}q_{j}=\delta_{0}. \label{r3}%
\end{align}
They obtained that the WEL estimates of $p_{i}$ and $q_{j}$ are given by%
\begin{align*}
^{w}\widetilde{p}_{i}  &  =\frac{1}{m}\frac{1}{1+\;^{w}%
\!\widetilde{\boldsymbol{\lambda}}^{T}\boldsymbol{v}_{i}},\text{ }i=1,...,m,\\
^{w}\widetilde{q}_{j}  &  =\frac{1}{n}\frac{1}{1+\;^{w}%
\!\widetilde{\boldsymbol{\lambda}}^{T}\boldsymbol{w}_{j}},\text{ }j=1,...,n,
\end{align*}
where $\boldsymbol{v}_{i}$ and $\boldsymbol{w}_{j}$ are the same
transformations given in Remark \ref{R3} with $\delta=\delta_{0}$\ and the
estimates of the Lagrange multipliers $\;^{w}\!\widetilde{\boldsymbol{\lambda
}}_{\ast}\boldsymbol{=}(\;^{w}\!\widetilde{\lambda}_{1,\ast},\;^{w}%
\!\widetilde{\lambda}_{2,\ast})^{T}$ are the solution in $\;^{w}%
\!\boldsymbol{\lambda}_{\ast}$\ of
\[
\frac{\omega_{1}}{m}%
{\textstyle\sum\limits_{i=1}^{m}}
\frac{\boldsymbol{v}_{i}}{1+\;^{w}\!\boldsymbol{\lambda}_{\ast}^{T}%
\boldsymbol{v}_{i}}+\frac{\omega_{2}}{n}%
{\textstyle\sum\limits_{j=1}^{n}}
\frac{\boldsymbol{w}_{j}}{1+\;^{w}\!\boldsymbol{\lambda}_{\ast}^{T}%
\boldsymbol{w}_{j}}=\boldsymbol{0}_{2}.
\]
Now, if we define the probability vectors
\begin{align*}
^{w}\boldsymbol{U}  &  \boldsymbol{=}\left(  \omega_{1}(\tfrac{1}%
{m},\overset{\underset{\smile}{m}}{...},\tfrac{1}{m}),\omega_{2}(\tfrac{1}%
{n},,\overset{\underset{\smile}{n}}{...},,\tfrac{1}{n})\right)  ^{T},\\
^{w}\boldsymbol{P}  &  =\left(  \omega_{1}\boldsymbol{p}^{T},\omega
_{2}\boldsymbol{q}^{T}\right)  ^{T}=\left(  \omega_{1}\left(  p_{1}%
,...,p_{m}\right)  ,\omega_{2}(q_{1},...,q_{n})\right)  ^{T},
\end{align*}
the weighted empirical log-likelihood ratio test $\mathcal{\ell}_{w}\left(
\delta_{0}\right)  $ presented in Wu and Yan (2012) can be written as
\begin{equation}
-2\mathcal{\ell}_{w}\left(  \delta_{0}\right)  =D_{Kullback}(^{w}%
\boldsymbol{U},^{w}\boldsymbol{P}). \label{r4}%
\end{equation}
The weighted empirical log-likelihood ratio test can be extended by defining
the family of weighted empirical phi-divergence test statistics as
\[
S_{\phi}\left(  \delta_{0}\right)  =\frac{2D_{\phi}(^{w}\boldsymbol{U}%
,^{w}\boldsymbol{P})}{\phi^{\prime\prime}(1)},
\]
where $D_{\phi}(^{w}\boldsymbol{U},^{w}\boldsymbol{P})$ is the phi-divergence
measure between the probability vectors $^{w}\boldsymbol{U}$ and
$^{w}\boldsymbol{P}$, i.e.,%
\begin{align*}
D_{\phi}(^{w}\boldsymbol{U},^{w}\boldsymbol{P})  &  =\omega_{1}%
{\textstyle\sum\limits_{i=1}^{m}}
\widetilde{p}_{i}\phi\left(  \frac{1}{m\widetilde{p}_{i}}\right)  +\omega_{2}%
{\textstyle\sum\limits_{j=1}^{n}}
\widetilde{q}_{j}\phi\left(  \frac{1}{n\widetilde{q}_{j}}\right) \\
&  =\frac{\omega_{1}}{m}%
{\textstyle\sum\limits_{i=1}^{m}}
\frac{\phi\left(  1+\;^{w}\!\widetilde{\boldsymbol{\lambda}}_{\ast}%
^{T}\boldsymbol{v}_{i}\right)  }{1+\;^{w}\!\widetilde{\boldsymbol{\lambda}%
}_{\ast}^{T}\boldsymbol{v}_{i}}+\frac{\omega_{2}}{n}%
{\textstyle\sum\limits_{j=1}^{n}}
\frac{\phi\left(  1+\;^{w}\!\widetilde{\boldsymbol{\lambda}}_{\ast}%
^{T}\boldsymbol{w}_{j}\right)  }{1+\;^{w}\!\widetilde{\boldsymbol{\lambda}%
}_{\ast}^{T}\boldsymbol{w}_{j}}.
\end{align*}
Taking into account
\[
D_{\phi}(^{w}\boldsymbol{U},^{w}\boldsymbol{P})=\;^{w}%
\!\widetilde{\boldsymbol{\lambda}}_{\ast}^{T}\boldsymbol{D}\;^{w}%
\!\widetilde{\boldsymbol{\lambda}}_{\ast}\boldsymbol{\boldsymbol{+}}o_{p}(N),
\]
where%
\[
\boldsymbol{D}=\frac{\omega_{1}}{m}%
{\textstyle\sum\limits_{i=1}^{m}}
\boldsymbol{v}_{i}\boldsymbol{v}_{i}^{T}\boldsymbol{+}\frac{\omega_{2}}{n}%
{\textstyle\sum\limits_{j=1}^{n}}
\boldsymbol{w}_{j}\boldsymbol{w}_{j}^{T},
\]
and based on Theorem 2.2. in Wu and Yan (2012), we have that%
\[
\frac{S_{\phi}\left(  \delta_{0}\right)  }{c}\overset{\mathcal{L}%
}{\underset{n,m\rightarrow\infty}{\longrightarrow}}\chi_{1}^{2},
\]
where $c$ is the second diagonal element of the matrix $\boldsymbol{D}^{-1}$.
\end{remark}

\section{Simulation Study\label{sec4}}

The square of the classical $z$-test statistic for two sample problems,%
\[
t\left(  \delta_{0}\right)  =\frac{\left(  \overline{X}-\overline{Y}%
+\delta_{0}\right)  ^{2}}{\frac{1}{m}S_{1}^{2}+\frac{1}{n}S_{2}^{2}},
\]
has asymptotically $\chi_{1}^{2}$ distribution, the same as the empirical
phi-divergence test statistics, according to Theorem \ref{Th1}. In order to
compare the finite sample performance of the confidence interval (CI) of
$\delta$\ based on $T_{\phi}\left(  \delta\right)  $ with respect to the ones
based on $t\left(  \delta\right)  $ as well as the empirical log-likelihood
ratio test-statistic $\mathcal{\ell}\left(  \delta\right)  $ given in
(\ref{C}),\ we count on a subfamily of phi-divergence measures, the so-called
power divergence measures $\phi_{\gamma}(x)=\frac{x^{1+\gamma}-x-\gamma
(x-1)}{\gamma(1+\gamma)}$, dependent of tuning parameter $\gamma\in%
\mathbb{R}
$, i.e.%
\[
T_{\gamma}\left(  \delta\right)  =\left\{
\begin{array}
[c]{ll}%
\dfrac{2}{\gamma(\gamma+1)}\left(
{\displaystyle\sum\limits_{i=1}^{m}}
(m\widetilde{p}_{i})^{-\gamma}+%
{\displaystyle\sum\limits_{j=1}^{n}}
(n\widetilde{q}_{j})^{-\gamma}-N\right)  , & \gamma\in%
\mathbb{R}
-\{0,-1\},\\
-2\left(  m\log m+n\log n+m%
{\displaystyle\sum\limits_{i=1}^{m}}
\log\widetilde{p}_{i}+n%
{\displaystyle\sum\limits_{j=1}^{n}}
\log\widetilde{q}_{j}\right)  , & \gamma=0,\\
2\left(  m\log m+n\log n+m%
{\displaystyle\sum\limits_{i=1}^{m}}
\widetilde{p}_{i}\log\widetilde{p}_{i}+n%
{\displaystyle\sum\limits_{j=1}^{n}}
\widetilde{q}_{j}\log\widetilde{q}_{j}\right)  , & \gamma=-1,
\end{array}
\right.
\]
where $\widetilde{p}_{i}$ and $\widetilde{q}_{j}$\ can be obtained from
(\ref{p2})-(\ref{q2}). We analyzed five new test-statistics, the empirical
power-divergence test statistics taking $\gamma\in\{-1,-0.5,\frac{2}{3}%
,1,2\}$. The case of $\gamma=0$ is not new, since the empirical log-likelihood
ratio test-statistic $\mathcal{\ell}\left(  \delta\right)  $ is a member of
the empirical power-divergence test statistics, i.e. $\mathcal{\ell}\left(
\delta\right)  =T_{\gamma=0}\left(  \delta\right)  $. The CI of $\delta
$\ based on $t\left(  \delta_{0}\right)  $\ with $100(1-\alpha)\%$ confidence
level is essentially the CI of $z$-test statistic, $(\widetilde{\delta}%
_{L},\widetilde{\delta}_{U})=(\overline{x}-\overline{y}-z_{\frac{\alpha}{2}%
}\sqrt{\frac{1}{m}s_{1}^{2}+\frac{1}{n}s_{2}^{2}},\overline{x}-\overline
{y}+z_{\frac{\alpha}{2}}\sqrt{\frac{1}{m}s_{1}^{2}+\frac{1}{n}s_{2}^{2}})$.
For $T\in\{T_{\gamma}\left(  \delta\right)  \}_{\gamma\in\Lambda}$,
$\Lambda=\{-1,-0.5,0,\frac{2}{3},1,2\}$, as mentioned in Remark \ref{R3},
since there is no explicit expression for $(\widetilde{\delta}_{L}%
,\widetilde{\delta}_{U})$ the bisection method should be followed. The
simulated coverage probabilities of the CI of $\delta$\ based on
$T\in\{t\left(  \delta\right)  \}\cup\{T_{\gamma}\left(  \delta\right)
\}_{\gamma\in\Lambda}$ were obtained with $R=15,000$ replications by%
\[
100\times\frac{1}{R}%
{\displaystyle\sum\limits_{r=1}^{R}}
\mathrm{I}(T^{(r)}\leq\chi_{1,\alpha}^{2}),
\]
with $\mathrm{I}(\cdot)$ being the indicator function. The simulated expected
width of the CI of $\delta$\ based on $T\in\{t\left(  \delta\right)
\}\cup\{T_{\gamma}\left(  \delta\right)  \}_{\gamma\in\Lambda}$ were obtained
with $R=3,000$ replications by%
\[
100\times\frac{1}{R}%
{\displaystyle\sum\limits_{r=1}^{R}}
(\widetilde{\delta}_{U}^{(r)}-\widetilde{\delta}_{L}^{(r)}).
\]
The reason why two different values of $R$ were followed is twofold. On one
hand calculating $\widetilde{\delta}_{U}^{(r)}-\widetilde{\delta}_{L}^{(r)}$
is much more time consuming than $\mathrm{I}(T^{(r)}\leq\chi_{1,\alpha}^{2})$
and on the other hand for the designed simulation experiment the replications
needed to obtain a good precision is less for the expected width than for the
coverage probability.

The simulation experiment is designed in a similar manner as in Wu and Yan
(2012). The true distributions, unknown in practice, are generated from:

\begin{description}
\item[i)] $X\sim\mathcal{N}(\mu,\sigma_{1}^{2})$, $Y\sim\mathcal{N}(\mu
+\delta_{0},\sigma_{2}^{2})$, with $\mu=1$, $\sigma_{1}^{2}=\sigma_{2}%
^{2}=1.5$, $\delta_{0}=0$;

\item[ii)] $X\sim\mathrm{lognormal}(\vartheta_{1},\theta_{1})$, $Y\sim
\mathrm{lognormal}(\vartheta_{2},\theta_{2})$, with $\vartheta_{1}=1.1$,
$\theta_{1}=0.4$, $\vartheta_{2}=1.2$, $\theta_{2}=0.2$.
\end{description}

Notice that in case ii) $\delta_{0}=0$ since $E[X]=E[Y]$. Depending on the
sample sizes, six scenarios were considered, $(m,n)\in
\{(15,30),(30,15),(30,30),(30,60),(60,30),(60,60)\}$. Table \ref{table1}
summarizes the results of the described simulation experiment with
$\alpha=0.05$. In all the cases and scenarios the narrower width is obtained
with $T_{\gamma=-1}\left(  \delta\right)  $, but the coverage probabilities
closest to $95\%$ depends on the case or scenario. For the case of the
lognormal distribution the CI based on $t\left(  \delta\right)  $
test-statistic has the closest coverage probability to $95\%$, but for the
case of the normal distribution $T_{\gamma=2/3}\left(  \delta\right)  $ and
$T_{\gamma=1}\left(  \delta\right)  $ power divergence based tend to have the
closest coverage probability to $95\%$.

In order to complement this study, the power functions have been drawn through
$R=15,000$ replications and taking $\delta$ as abscissa. For case i) the power
functions exhibit a symmetric shape with respect to the center and also a
parallel shape, in such a way that the test statistics with better
approximation of the size have worse power. For case ii), fixing the values of
the two parameters of $X$ and changing the two parameter of $Y$ as%
\[
\vartheta_{1}^{\prime}=k\vartheta_{1}^{\prime},\quad\theta_{1}^{\prime
}=k\theta_{1},\quad k=\frac{\log(\delta+\exp\{\vartheta_{1}+\frac{1}{2}%
\theta_{1}\})}{\vartheta_{2}+\frac{1}{2}\theta_{2}},
\]
$\delta$ is displaced from $\delta_{0}=0$ to the right when $k>1$ and from
$\delta_{0}=0$ to the left when $0<k<1$ ($\delta>-\exp\{\vartheta_{1}+\frac
{1}{2}\theta_{1}\}$). Unlike case i), the power function of case ii) exhibits
a different shape on both sides from the center of abscissa, and the most
prominent differences are on the left hand size. Clearly in case ii), even
though the approximated size for $t\left(  \delta_{0}\right)  $ is the best
one, it has the worst approximated power function, in particular there is an
area of the approximated power function on the left hand side of $\delta
_{0}=0$\ with smaller value than the approximated size. Hence, in case ii) the
power functions of $T\in\{T_{\gamma}\left(  \delta\right)  \}_{\gamma
\in\Lambda}$ are more acceptable than the power function of $t\left(
\delta_{0}\right)  $. Taking into account the strong and weak point of
$t\left(  \delta_{0}\right)  $ in case ii), $T_{\gamma=2/3}\left(  \delta
_{0}\right)  $ could be a good choice for moderate sample sizes and
$\mathcal{\ell}\left(  \delta_{0}\right)  =T_{\gamma=0}\left(  \delta
_{0}\right)  $ for small sample sizes.%

\begin{table}[htbp]  \renewcommand{\arraystretch}{0.84}
\tabcolsep2.8pt  \centering
$%
\begin{tabular}
[c]{ccccc}\hline
\multicolumn{5}{c}{case i): normal populations}\\\hline
$m$ & $n$ & $CI$ & coverage & width\\\hline
$15$ & $30$ & \multicolumn{1}{l}{$T_{\gamma=-1}\left(  \delta\right)  $} &
$92.1$ & \fbox{$1.56$}\\
$15$ & $30$ & \multicolumn{1}{l}{$T_{\gamma=-0.5}\left(  \delta\right)  $} &
$92.6$ & $1.60$\\
$15$ & $30$ & \multicolumn{1}{l}{$T_{\gamma=0}\left(  \delta\right)  $} &
$93.0$ & $1.64$\\
$15$ & $30$ & \multicolumn{1}{l}{$T_{\gamma=2/3}\left(  \delta\right)  $} &
$93.2$ & $1.65$\\
$15$ & $30$ & \multicolumn{1}{l}{$T_{\gamma=1}\left(  \delta\right)  $} &
$93.2$ & $1.65$\\
$15$ & $30$ & \multicolumn{1}{l}{$T_{\gamma=2}\left(  \delta\right)  $} &
$92.6$ & $1.63$\\
$15$ & $30$ & \multicolumn{1}{l}{$t\left(  \delta\right)  $} & \fbox{$93.5$} &
$1.66$\\\hline
$30$ & $15$ & \multicolumn{1}{l}{$T_{\gamma=-1}\left(  \delta\right)  $} &
$92.8$ & \fbox{$1.40$}\\
$30$ & $15$ & \multicolumn{1}{l}{$T_{\gamma=-0.5}\left(  \delta\right)  $} &
$93.3$ & $1.43$\\
$30$ & $15$ & \multicolumn{1}{l}{$T_{\gamma=0}\left(  \delta\right)  $} &
$93.6$ & $1.45$\\
$30$ & $15$ & \multicolumn{1}{l}{$T_{\gamma=2/3}\left(  \delta\right)  $} &
\fbox{$93.9$} & $1.46$\\
$30$ & $15$ & \multicolumn{1}{l}{$T_{\gamma=1}\left(  \delta\right)  $} &
\fbox{$93.9$} & $1.47$\\
$30$ & $15$ & \multicolumn{1}{l}{$T_{\gamma=2}\left(  \delta\right)  $} &
$93.6$ & $1.46$\\
$30$ & $15$ & \multicolumn{1}{l}{$t\left(  \delta\right)  $} & \fbox{$93.9$} &
$1.46$\\\hline
$30$ & $30$ & \multicolumn{1}{l}{$T_{\gamma=-1}\left(  \delta\right)  $} &
$93.8$ & \fbox{$1.24$}\\
$30$ & $30$ & \multicolumn{1}{l}{$T_{\gamma=-0.5}\left(  \delta\right)  $} &
$94.3$ & $1.29$\\
$30$ & $30$ & \multicolumn{1}{l}{$T_{\gamma=0}\left(  \delta\right)  $} &
$94.5$ & $1.28$\\
$30$ & $30$ & \multicolumn{1}{l}{$T_{\gamma=2/3}\left(  \delta\right)  $} &
$94.8$ & $1.30$\\
$30$ & $30$ & \multicolumn{1}{l}{$T_{\gamma=1}\left(  \delta\right)  $} &
\fbox{$94.9$} & $1.29$\\
$30$ & $30$ & \multicolumn{1}{l}{$T_{\gamma=2}\left(  \delta\right)  $} &
$94.7$ & $1.29$\\
$30$ & $30$ & \multicolumn{1}{l}{$t\left(  \delta\right)  $} & $94.7$ &
$1.28$\\\hline
$30$ & $60$ & \multicolumn{1}{l}{$T_{\gamma=-1}\left(  \delta\right)  $} &
$93.4$ & \fbox{$1.14$}\\
$30$ & $60$ & \multicolumn{1}{l}{$T_{\gamma=-0.5}\left(  \delta\right)  $} &
$93.8$ & $1.16$\\
$30$ & $60$ & \multicolumn{1}{l}{$T_{\gamma=0}\left(  \delta\right)  $} &
$94.1$ & $1.18$\\
$30$ & $60$ & \multicolumn{1}{l}{$T_{\gamma=2/3}\left(  \delta\right)  $} &
\fbox{$94.3$} & $1.20$\\
$30$ & $60$ & \multicolumn{1}{l}{$T_{\gamma=1}\left(  \delta\right)  $} &
\fbox{$94.3$} & $1.20$\\
$30$ & $60$ & \multicolumn{1}{l}{$T_{\gamma=2}\left(  \delta\right)  $} &
$94.1$ & $1.19$\\
$30$ & $60$ & \multicolumn{1}{l}{$t\left(  \delta\right)  $} & $94.2$ &
$1.18$\\\hline
$60$ & $30$ & \multicolumn{1}{l}{$T_{\gamma=-1}\left(  \delta\right)  $} &
$94.4$ & \fbox{$1.01$}\\
$60$ & $30$ & \multicolumn{1}{l}{$T_{\gamma=-0.5}\left(  \delta\right)  $} &
$94.6$ & $1.03$\\
$60$ & $30$ & \multicolumn{1}{l}{$T_{\gamma=0}\left(  \delta\right)  $} &
$94.8$ & $1.04$\\
$60$ & $30$ & \multicolumn{1}{l}{$T_{\gamma=2/3}\left(  \delta\right)  $} &
\fbox{$95.0$} & $1.04$\\
$60$ & $30$ & \multicolumn{1}{l}{$T_{\gamma=1}\left(  \delta\right)  $} &
\fbox{$95.0$} & $1.05$\\
$60$ & $30$ & \multicolumn{1}{l}{$T_{\gamma=2}\left(  \delta\right)  $} &
$94.9$ & $1.05$\\
$60$ & $30$ & \multicolumn{1}{l}{$t\left(  \delta\right)  $} & $94.8$ &
$1.04$\\\hline
$60$ & $60$ & \multicolumn{1}{l}{$T_{\gamma=-1}\left(  \delta\right)  $} &
$94.3$ & \fbox{$0.89$}\\
$60$ & $60$ & \multicolumn{1}{l}{$T_{\gamma=-0.5}\left(  \delta\right)  $} &
$94.5$ & $0.90$\\
$60$ & $60$ & \multicolumn{1}{l}{$T_{\gamma=0}\left(  \delta\right)  $} &
$94.7$ & $0.91$\\
$60$ & $60$ & \multicolumn{1}{l}{$T_{\gamma=2/3}\left(  \delta\right)  $} &
\fbox{$94.9$} & $0.92$\\
$60$ & $60$ & \multicolumn{1}{l}{$T_{\gamma=1}\left(  \delta\right)  $} &
\fbox{$94.9$} & $0.92$\\
$60$ & $60$ & \multicolumn{1}{l}{$T_{\gamma=2}\left(  \delta\right)  $} &
\fbox{$94.9$} & $0.92$\\
$60$ & $60$ & \multicolumn{1}{l}{$t\left(  \delta\right)  $} & $94.8$ &
$0.91$\\\hline
\end{tabular}
\ \ \ \ \qquad%
\begin{tabular}
[c]{ccccc}\hline
\multicolumn{5}{c}{case ii): lognormal populations}\\\hline
$m$ & $n$ & $CI$ & coverage & width\\\hline
$15$ & $30$ & \multicolumn{1}{l}{$T_{\gamma=-1}\left(  \delta\right)  $} &
$90.0$ & \fbox{$2.60$}\\
$15$ & $30$ & \multicolumn{1}{l}{$T_{\gamma=-0.5}\left(  \delta\right)  $} &
$90.6$ & $2.66$\\
$15$ & $30$ & \multicolumn{1}{l}{$T_{\gamma=0}\left(  \delta\right)  $} &
$91.0$ & $2.77$\\
$15$ & $30$ & \multicolumn{1}{l}{$T_{\gamma=2/3}\left(  \delta\right)  $} &
$90.8$ & $2.82$\\
$15$ & $30$ & \multicolumn{1}{l}{$T_{\gamma=1}\left(  \delta\right)  $} &
$90.6$ & $2.87$\\
$15$ & $30$ & \multicolumn{1}{l}{$T_{\gamma=2}\left(  \delta\right)  $} &
$89.2$ & $2.86$\\
$15$ & $30$ & \multicolumn{1}{l}{$t\left(  \delta\right)  $} & \fbox{$92.3$} &
$2.76$\\\hline
$30$ & $15$ & \multicolumn{1}{l}{$T_{\gamma=-1}\left(  \delta\right)  $} &
$92.2$ & \fbox{$2.35$}\\
$30$ & $15$ & \multicolumn{1}{l}{$T_{\gamma=-0.5}\left(  \delta\right)  $} &
$92.5$ & $2.46$\\
$30$ & $15$ & \multicolumn{1}{l}{$T_{\gamma=0}\left(  \delta\right)  $} &
$92.7$ & $2.52$\\
$30$ & $15$ & \multicolumn{1}{l}{$T_{\gamma=2/3}\left(  \delta\right)  $} &
$92.5$ & $2.60$\\
$30$ & $15$ & \multicolumn{1}{l}{$T_{\gamma=1}\left(  \delta\right)  $} &
$92.3$ & $2.62$\\
$30$ & $15$ & \multicolumn{1}{l}{$T_{\gamma=2}\left(  \delta\right)  $} &
$91.2$ & $2.68$\\
$30$ & $15$ & \multicolumn{1}{l}{$t\left(  \delta\right)  $} & \fbox{$94.6$} &
$2.48$\\\hline
$30$ & $30$ & \multicolumn{1}{l}{$T_{\gamma=-1}\left(  \delta\right)  $} &
$92.4$ & \fbox{$2.10$}\\
$30$ & $30$ & \multicolumn{1}{l}{$T_{\gamma=-0.5}\left(  \delta\right)  $} &
$92.7$ & $2.14$\\
$30$ & $30$ & \multicolumn{1}{l}{$T_{\gamma=0}\left(  \delta\right)  $} &
$93.0$ & $2.24$\\
$30$ & $30$ & \multicolumn{1}{l}{$T_{\gamma=2/3}\left(  \delta\right)  $} &
$93.0$ & $2.31$\\
$30$ & $30$ & \multicolumn{1}{l}{$T_{\gamma=1}\left(  \delta\right)  $} &
$92.9$ & $2.33$\\
$30$ & $30$ & \multicolumn{1}{l}{$T_{\gamma=2}\left(  \delta\right)  $} &
$92.1$ & $2.39$\\
$30$ & $30$ & \multicolumn{1}{l}{$t\left(  \delta\right)  $} & \fbox{$94.1$} &
$2.16$\\\hline
$30$ & $60$ & \multicolumn{1}{l}{$T_{\gamma=-1}\left(  \delta\right)  $} &
$92.0$ & \fbox{$1.92$}\\
$30$ & $60$ & \multicolumn{1}{l}{$T_{\gamma=-0.5}\left(  \delta\right)  $} &
$92.5$ & $1.97$\\
$30$ & $60$ & \multicolumn{1}{l}{$T_{\gamma=0}\left(  \delta\right)  $} &
$92.8$ & $2.03$\\
$30$ & $60$ & \multicolumn{1}{l}{$T_{\gamma=2/3}\left(  \delta\right)  $} &
$92.8$ & $2.11$\\
$30$ & $60$ & \multicolumn{1}{l}{$T_{\gamma=1}\left(  \delta\right)  $} &
$92.8$ & $2.12$\\
$30$ & $60$ & \multicolumn{1}{l}{$T_{\gamma=2}\left(  \delta\right)  $} &
$91.9$ & $2.18$\\
$30$ & $60$ & \multicolumn{1}{l}{$t\left(  \delta\right)  $} & \fbox{$93.5$} &
$1.98$\\\hline
$60$ & $30$ & \multicolumn{1}{l}{$T_{\gamma=-1}\left(  \delta\right)  $} &
$92.9$ & \fbox{$1.71$}\\
$60$ & $30$ & \multicolumn{1}{l}{$T_{\gamma=-0.5}\left(  \delta\right)  $} &
$93.3$ & $1.78$\\
$60$ & $30$ & \multicolumn{1}{l}{$T_{\gamma=0}\left(  \delta\right)  $} &
$93.5$ & $1.83$\\
$60$ & $30$ & \multicolumn{1}{l}{$T_{\gamma=2/3}\left(  \delta\right)  $} &
$93.6$ & $1.87$\\
$60$ & $30$ & \multicolumn{1}{l}{$T_{\gamma=1}\left(  \delta\right)  $} &
$93.5$ & $1.91$\\
$60$ & $30$ & \multicolumn{1}{l}{$T_{\gamma=2}\left(  \delta\right)  $} &
$93.0$ & $1.95$\\
$60$ & $30$ & \multicolumn{1}{l}{$t\left(  \delta\right)  $} & \fbox{$94.5$} &
$1.77$\\\hline
$60$ & $60$ & \multicolumn{1}{l}{$T_{\gamma=-1}\left(  \delta\right)  $} &
$93.6$ & \fbox{$1.51$}\\
$60$ & $60$ & \multicolumn{1}{l}{$T_{\gamma=-0.5}\left(  \delta\right)  $} &
$93.9$ & $1.55$\\
$60$ & $60$ & \multicolumn{1}{l}{$T_{\gamma=0}\left(  \delta\right)  $} &
$94.1$ & $1.59$\\
$60$ & $60$ & \multicolumn{1}{l}{$T_{\gamma=2/3}\left(  \delta\right)  $} &
$94.2$ & $1.65$\\
$60$ & $60$ & \multicolumn{1}{l}{$T_{\gamma=1}\left(  \delta\right)  $} &
$94.2$ & $1.66$\\
$60$ & $60$ & \multicolumn{1}{l}{$T_{\gamma=2}\left(  \delta\right)  $} &
$93.6$ & $1.70$\\
$60$ & $60$ & \multicolumn{1}{l}{$t\left(  \delta\right)  $} & \fbox{$94.6$} &
$1.54$\\\hline
\end{tabular}
\ \ \ \ $%
\caption{Simulated coverage probability and expected width of $0.95$ level CIs of $\delta$ for two pupulations.\label{table1}}%
\end{table}%
%

\begin{figure}[htbp]   \centering
\begin{tabular}
[c]{l}%
\raisebox{-0cm}{\includegraphics[
trim=1.171155in 0.000000in 0.000000in -0.189081in,
height=9.8013cm,
width=18.3045cm
]%
{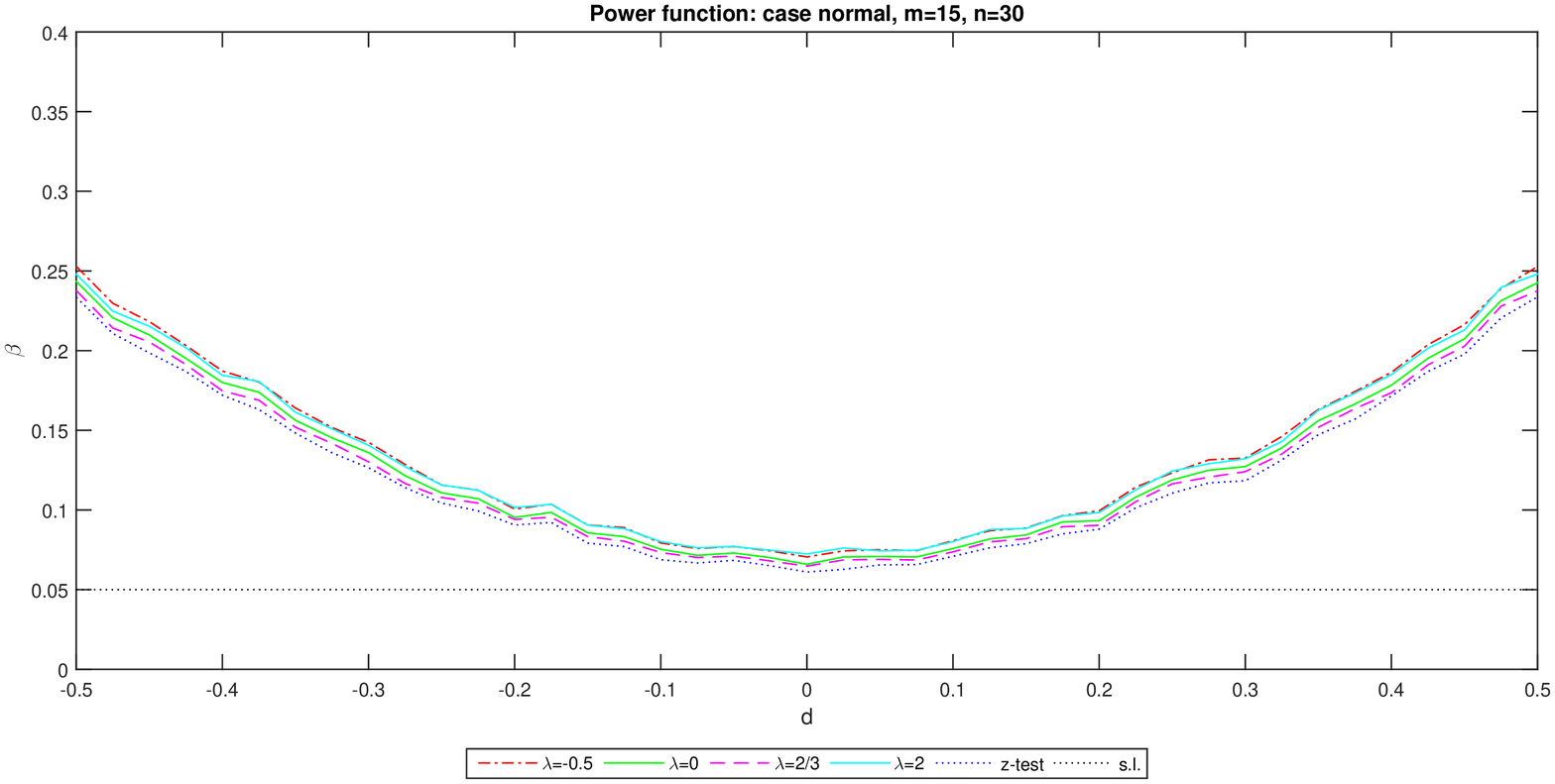}%
}%
\\%
\raisebox{-0cm}{\includegraphics[
trim=1.171155in 0.000000in 0.000000in -0.189081in,
height=9.8013cm,
width=18.3045cm
]%
{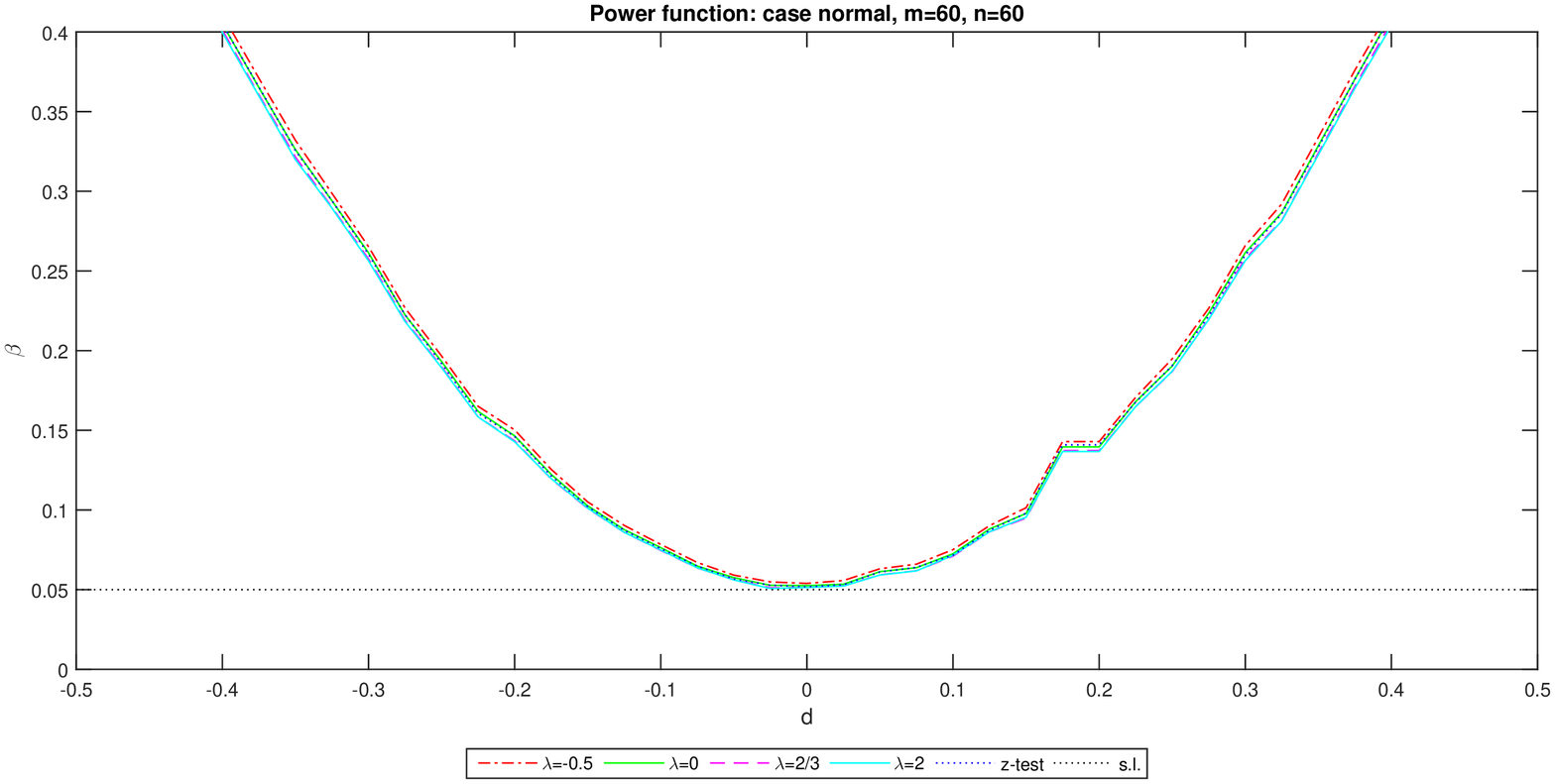}%
}%
\end{tabular}
\caption{Power functions for two normal pupulations.\label{figure1}}%
\end{figure}%
%

\begin{figure}[htbp]   \centering
\begin{tabular}
[c]{l}%
\raisebox{-0cm}{\includegraphics[
trim=1.171155in 0.000000in 0.000000in -0.189081in,
height=9.8035cm,
width=18.3023cm
]%
{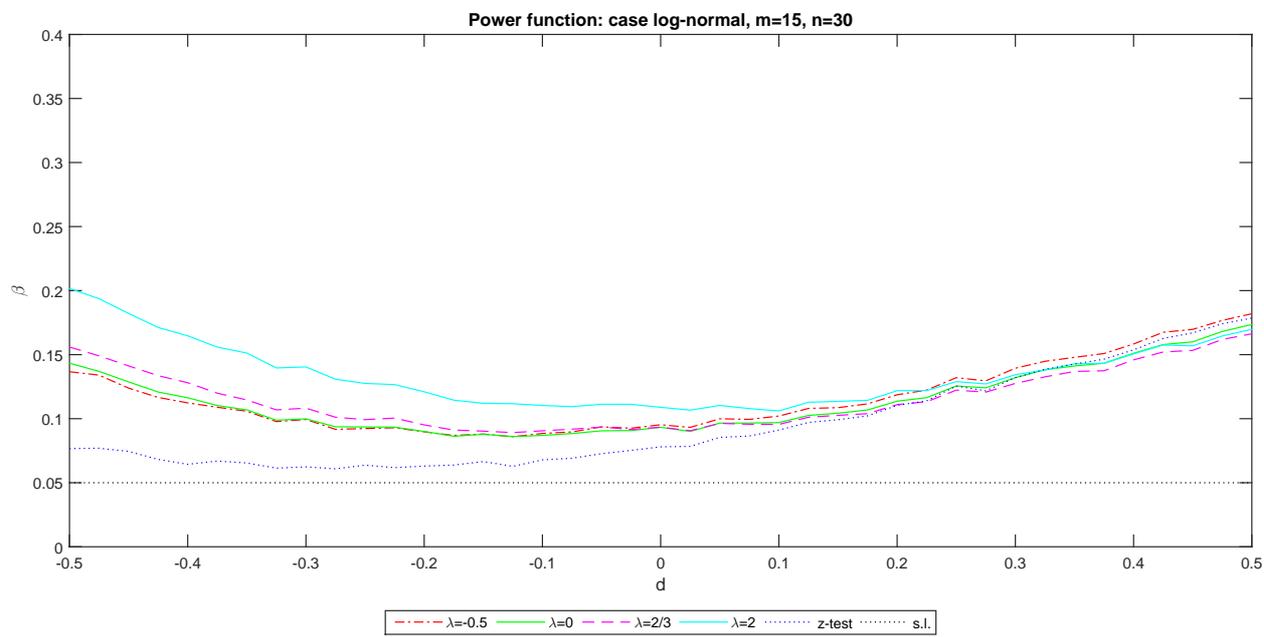}%
}%
\\%
\raisebox{-0cm}{\includegraphics[
trim=1.171155in 0.000000in 0.000000in -0.187821in,
height=9.8013cm,
width=18.3023cm
]%
{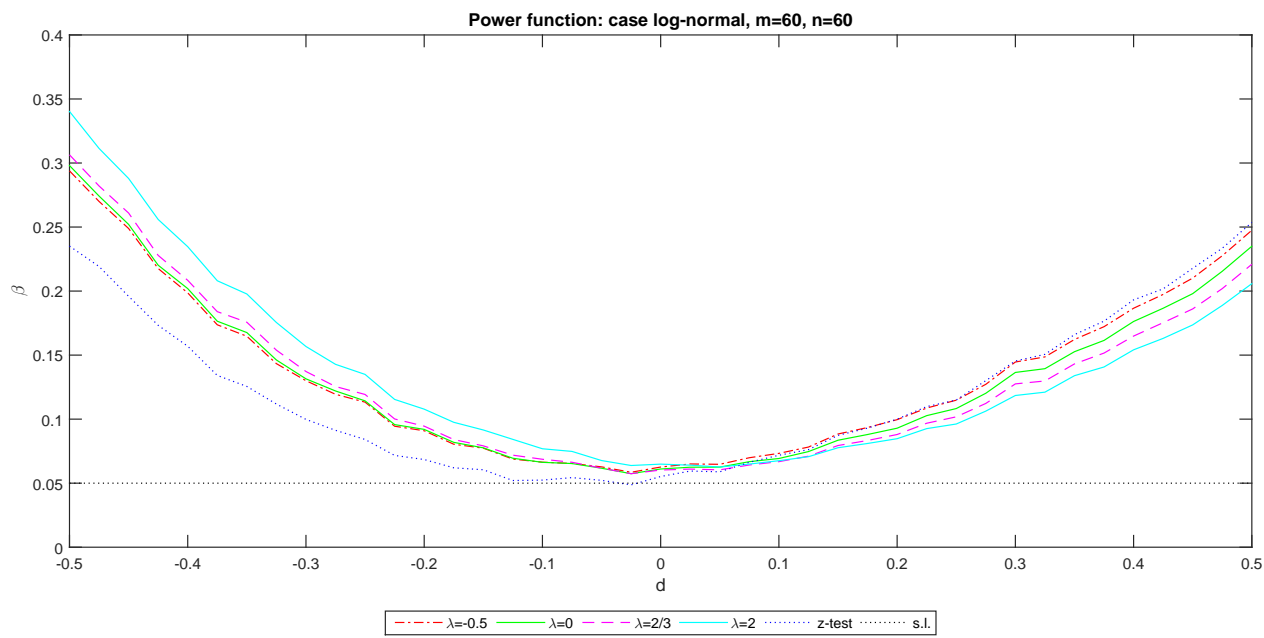}%
}%
\end{tabular}
\caption{Power functions for two log-normal pupulations.\label{figure2}}%
\end{figure}%

\newpage

\section{Numerical Example\label{sec3}}

Yu et al. (2002) presented a data set on evaluating gasoline quality based on
what is known as Reid vapor pressure, collected by the Environmental
Protection Agency of the United States. Two types of Reid vapor pressure
measurements $X$ and $Y$ are included in the data set. Values of $X$ are
obtained by an Agency inspector who visits gas pumps in a city, takes samples
of gasoline of a particular brand, and measures the Reid vapor pressure right
on the spot; values of $Y$, on the other hand, are produced by shipping
gasoline samples to the laboratory for measurements of presumably higher
precision at a high cost. The original data set has a double sampling
structure, with a subset of the sample units having measurements on both $X$
and $Y$. Table \ref{table2} contains two independent samples of a new
reformulated gasoline, one related to $X$ with sample size 30 and the other,
to $Y$ with sample size 15.%

\begin{table}[htbp]  \renewcommand{\arraystretch}{0.84}
\tabcolsep2.8pt  \centering
$%
\begin{tabular}
[c]{ccccccccccc}\hline
$X$\text{ (Field)} & $8.09$ & $8.46$ & $7.37$ & $8.80$ & $7.59$ & $8.62$ &
$7.88$ & $7.98$ & $7.47$ & $8.90$\\
& $8.51$ & $8.69$ & $7.93$ & $7.96$ & $7.45$ & $8.02$ & $7.32$ & $7.45$ &
$7.86$ & $7.88$\\
& $7.39$ & $8.03$ & $7.31$ & $7.44$ & $7.95$ & $7.92$ & $7.53$ & $8.01$ &
$7.16$ & $7.31$\\\hline
$Y$\text{ (Lab)} & $8.28$ & $8.63$ & $9.28$ & $7.85$ & $8.62$ & $9.14$ &
$7.86$ & $7.90$ & $8.52$ & $7.92$\\
& $7.89$ & $8.48$ & $7.95$ & $8.32$ & $7.60$ &  &  &  &  & \\\hline
\end{tabular}
\ \ \ \ $%
\caption{Field and lab data on Reid vapor pressure for newly reformulated gasoline.\label{table2}}%
\end{table}%

One of the assumptions of Yu et al. (2002) is that the field measurement $X$
and the lab measurement $Y$ have common mean $%
\mu
$. The two types of measurements differ, however, in terms of precision. Yu et
al. (2002) also assumed that $(X$, $Y)$ was bivariate normal, which would not
be required under our proposed empirical likelihood approach. In Tsao and Wu
(2006) this example was studied on the basis of the empirical log-likelihood
ratio test. The $95\%$ CIs of $\delta$\ based on $T\in\{t\left(
\delta\right)  \}\cup\{T_{\gamma}\left(  \delta\right)  \}_{\gamma\in\Lambda}$
are summarized in Table \ref{table3}. As in the simulation study, the
narrowest CI width is obtained with $T_{\gamma=-1}\left(  \delta\right)  $. In
all the test-statistics used to construct the CIs $\delta_{0}=0$ is not
contained, so the null hypothesis of equal means is rejected with $0.05$
significance level.%

\begin{table}[htbp]  \renewcommand{\arraystretch}{0.84}
\tabcolsep2.8pt  \centering
$%
\begin{tabular}
[c]{cccc}\hline
$CI$ & lower bound & upper bound & width\\\hline
\multicolumn{1}{l}{$T_{\gamma=-1}\left(  \delta\right)  $} & $0.122$ & $0.703$
& $0.581$\\
\multicolumn{1}{l}{$T_{\gamma=-0.5}\left(  \delta\right)  $} & $0.121$ &
$0.712$ & $0.591$\\
\multicolumn{1}{l}{$T_{\gamma=0}\left(  \delta\right)  $} & $0.121$ & $0.718$
& $0.598$\\
\multicolumn{1}{l}{$T_{\gamma=2/3}\left(  \delta\right)  $} & $0.123$ &
$0.724$ & $0.602$\\
\multicolumn{1}{l}{$T_{\gamma=1}\left(  \delta\right)  $} & $0.124$ & $0.726$
& $0.601$\\
\multicolumn{1}{l}{$T_{\gamma=2}\left(  \delta\right)  $} & $0.133$ & $0.725$
& $0.592$\\
\multicolumn{1}{l}{$t\left(  \delta\right)  $} & $0.101$ & $0.712$ &
$0.611$\\\hline
\end{tabular}
\ \ $%
\caption{Power divergence and z-test based $0.95$ level CIs for field and lab data on Reid vapor pressure for newly reformulated gasoline.\label{table3}}%
\end{table}%

\section{Further extensions\label{sec5}}

\subsection{Extension of the dimension for the random variable}

Let $\boldsymbol{X}_{1},...,\boldsymbol{X}_{m}$ and $\boldsymbol{Y}%
_{1},...,\boldsymbol{Y}_{n}$ be two mutually independent random samples with
common distribution function $F$ and $G$ respectively. Assuming that
$\boldsymbol{X}_{i}$ and $\boldsymbol{Y}_{j}$ take values in $\mathbb{R}^{k}$
and
\begin{align*}
E\left[  \boldsymbol{X}_{i}\right]   &  =\boldsymbol{\mu}_{1},\text{
}Cov\left[  \boldsymbol{X}_{i}\right]  =\boldsymbol{\Sigma}_{1}\text{,
}i=1,...,m,\\
E\left[  \boldsymbol{Y}_{j}\right]   &  =\boldsymbol{\mu}_{2},\text{
}Cov\left[  \boldsymbol{Y}_{j}\right]  =\boldsymbol{\Sigma}_{2}\text{,
}j=1,...,n,
\end{align*}
with $\boldsymbol{\mu}_{1}=\boldsymbol{\mu}$ and $\boldsymbol{\mu}%
_{2}=\boldsymbol{\mu}+\boldsymbol{\delta}$, our interest is in testing
\begin{equation}
H_{0}\text{: }\boldsymbol{\delta}=\boldsymbol{\delta}_{0}\text{ vs. }%
H_{1}\text{: }\boldsymbol{\delta}\neq\boldsymbol{\delta}_{0}, \label{5.1}%
\end{equation}
where $\boldsymbol{\delta}_{0}\in$ $\mathbb{R}^{k}$ and known.

The empirical likelihood under $H_{0}$ is%
\[
\mathcal{L}\left(  \boldsymbol{\delta}\right)  =%
{\textstyle\prod\limits_{i=1}^{m}}
{\textstyle\prod\limits_{j=1}^{n}}
p_{i}q_{j}\text{ s.t. }%
{\textstyle\sum\limits_{i=1}^{m}}
p_{i}(\boldsymbol{x}_{i}-\boldsymbol{\mu})=\boldsymbol{0}_{k},\text{ }%
{\textstyle\sum\limits_{i=1}^{m}}
p_{i}=1,\text{ }%
{\textstyle\sum\limits_{j=1}^{n}}
q_{j}(\boldsymbol{y}_{j}-\boldsymbol{\mu}-\boldsymbol{\delta})=\boldsymbol{0}%
_{k},\text{ }%
{\textstyle\sum\limits_{j=1}^{n}}
q_{j}=1\text{,}%
\]
and in the whole parameter space,%
\[
\mathcal{L}\left(  \boldsymbol{p},\boldsymbol{q}\right)  =%
{\textstyle\prod\limits_{i=1}^{m}}
{\textstyle\prod\limits_{j=1}^{n}}
p_{i}q_{j}:%
{\textstyle\sum\limits_{i=1}^{m}}
p_{i}=%
{\textstyle\sum\limits_{j=1}^{n}}
q_{j}=1,\text{ }p_{i},q_{j}\geq0,
\]
with $p_{i}=F\left(  \boldsymbol{x}_{i}\right)  -F\left(  \boldsymbol{x}%
_{i}^{-}\right)  $ , $q_{j}=G(\boldsymbol{y}_{j})-G(\boldsymbol{y}_{j}^{-})$
and $\boldsymbol{p=}\left(  p_{1},...,p_{m}\right)  ^{T}$, $\boldsymbol{q}%
=\left(  q_{1},...,q_{n}\right)  ^{T}$. The empirical log-likelihood ratio
statistic for testing (\ref{5.1}) is given by
\[
\mathcal{\ell}\left(  \boldsymbol{\delta}_{0}\right)  =-2\log\frac
{\sup_{\boldsymbol{p},\boldsymbol{q}}\mathcal{L}\left(  \boldsymbol{\delta
}_{0}\right)  }{\sup_{\boldsymbol{p},\boldsymbol{q}}\mathcal{L}\left(
\boldsymbol{p},\boldsymbol{q}\right)  }.
\]
Based on Lagrange multiplier methods, $\sup_{\boldsymbol{p},\boldsymbol{q}%
}\mathcal{L}\left(  \boldsymbol{\delta}_{0}\right)  $ is obtained for
\begin{equation}
p_{i}=\frac{1}{m}\frac{1}{1+\boldsymbol{\lambda}_{1}^{T}(\boldsymbol{x}%
_{i}-\boldsymbol{\mu})},\text{ }i=1,...,m, \label{5.2}%
\end{equation}%
\begin{equation}
q_{j}=\frac{1}{n}\frac{1}{1+\boldsymbol{\lambda}_{2}^{T}(\boldsymbol{y}%
_{j}-\boldsymbol{\mu}-\boldsymbol{\delta}_{0})},\text{ }j=1,...,n, \label{5.3}%
\end{equation}
where $m\boldsymbol{\lambda}_{1}^{T}+n\boldsymbol{\lambda}_{2}^{T}=0$. The
empirical maximum likelihood estimates $\widetilde{\boldsymbol{\lambda}}_{1}$,
$\widetilde{\boldsymbol{\lambda}}_{2}$ and $\widetilde{\boldsymbol{\mu}}$ of
$\boldsymbol{\lambda}_{1}$, $\boldsymbol{\lambda}_{2}$ and $\boldsymbol{\mu}$,
under $H_{0}$, can be obtained as the solution of%
\[
\left\{
\begin{array}
[c]{l}%
\frac{1}{m}%
{\textstyle\sum\limits_{i=1}^{m}}
\frac{(\boldsymbol{x}_{i}-\boldsymbol{\mu})}{1+\boldsymbol{\lambda}_{1}%
^{T}(\boldsymbol{x}_{i}-\boldsymbol{\mu})}=\boldsymbol{0}_{k}\\
\frac{1}{n}%
{\textstyle\sum\limits_{j=1}^{n}}
\frac{\left(  \boldsymbol{y}_{j}-\boldsymbol{\mu}-\boldsymbol{\delta}\right)
}{1+\boldsymbol{\lambda}_{2}^{T}\left(  \boldsymbol{y}_{j}-\boldsymbol{\mu
}-\boldsymbol{\delta}\right)  }=\boldsymbol{0}_{k}\\
m\boldsymbol{\lambda}_{1}^{T}+n\boldsymbol{\lambda}_{2}^{T}=\boldsymbol{0}_{k}%
\end{array}
\right.  .
\]
On the other hand $\sup_{\boldsymbol{p},\boldsymbol{q}}\mathcal{L}\left(
\boldsymbol{p},\boldsymbol{q}\right)  $ is obtained for
\begin{equation}
p_{i}=\frac{1}{m}\text{, }i=1,...,m\text{ and }q_{j}=\frac{1}{n}\text{,
}j=1,...,n. \label{5.4}%
\end{equation}
After some algebra, we obtain
\begin{align}
\mathcal{\ell(}\boldsymbol{\delta}_{0})  &  =-2\log\frac{\sup\mathcal{L}%
\left(  \delta_{0}\right)  }{\sup\mathcal{L}\left(  \boldsymbol{p}%
,\boldsymbol{q}\right)  }=-2\left(  -m\log m-%
{\textstyle\sum\limits_{i=1}^{m}}
\log\left(  1+\widetilde{\boldsymbol{\lambda}}_{1}^{T}(\boldsymbol{X}%
_{i}-\boldsymbol{\mu})\right)  -n\log n\right. \nonumber\\
&  \left.  -%
{\textstyle\sum\limits_{j=1}^{n}}
\log\left(  1+\widetilde{\boldsymbol{\lambda}}_{2}^{T}\left(  \boldsymbol{Y}%
_{j}-\boldsymbol{\mu}-\boldsymbol{\delta}_{0}\right)  \right)  +m\log m+n\log
n\right) \nonumber\\
&  =2\left(
{\textstyle\sum\limits_{i=1}^{m}}
\log\left(  1+\widetilde{\boldsymbol{\lambda}}_{1}^{T}(\boldsymbol{X}%
_{i}-\boldsymbol{\mu})\right)  +%
{\textstyle\sum\limits_{j=1}^{n}}
\log\left(  1+\widetilde{\boldsymbol{\lambda}}_{2}^{T}\left(  \boldsymbol{Y}%
_{j}-\boldsymbol{\mu}-\boldsymbol{\delta}\right)  \right)  \right)  .
\label{5.5}%
\end{align}
Under some regularity conditions, it follows that
\[
\Pr\left(  \mathcal{\ell}\left(  \boldsymbol{\delta}_{0}\right)
<\chi_{k,\alpha}^{2}\right)  =\alpha+O(n^{-1}),
\]
where $\chi_{k,\alpha}^{2}$ is the $\alpha$-th order quantile of the $\chi
_{k}^{2}$ distribution.

Let
\[
\widetilde{\boldsymbol{P}}=\left(  \widetilde{p}_{1}\frac{m}{N}%
,...,\widetilde{p}_{m}\nu,\widetilde{q}_{1}(1-\frac{m}{N}),...,\widetilde{q}%
_{n}(1-\frac{m}{N})\right)  ^{T}%
\]
be the estimate the probability vector%
\[
\boldsymbol{P}=\left(  p_{1}\nu,...,p_{m}\nu,q_{1}(1-\nu),...,q_{n}%
(1-\nu)\right)  ^{T},
\]
where $\widetilde{p}_{i}$ and $\widetilde{q}_{j}$ are obtained from
(\ref{5.2}) and (\ref{5.3}) replacing $\boldsymbol{\lambda}_{1}$,
$\boldsymbol{\lambda}_{2}$ and $\boldsymbol{\mu}$ by
$\widetilde{\boldsymbol{\lambda}}_{1}$, $\widetilde{\boldsymbol{\lambda}}_{2}$
and $\widetilde{\boldsymbol{\mu}}$, respectively. In this $k$-dimensional
case, the Kullback-Leibler divergence between the probability vectors
$\boldsymbol{U}$ and $\widetilde{\boldsymbol{P}}$ is given by
\begin{align*}
D_{Kullback}(\boldsymbol{U},\widetilde{\boldsymbol{P}})  &  =%
{\textstyle\sum\limits_{i=1}^{m}}
\frac{1}{N}\log\frac{\frac{1}{N}}{\widetilde{p}_{i}\frac{m}{N}}+%
{\textstyle\sum\limits_{j=1}^{n}}
\frac{1}{N}\log\frac{\frac{1}{N}}{\widetilde{q}_{j}\frac{n}{N}}\\
&  =\frac{1}{N}\left\{
{\textstyle\sum\limits_{i=1}^{m}}
\log\left(  1+\widetilde{\boldsymbol{\lambda}}_{1}^{T}(\boldsymbol{X}%
_{i}-\boldsymbol{\mu})\right)  +%
{\textstyle\sum\limits_{j=1}^{n}}
\log\left(  1+\widetilde{\boldsymbol{\lambda}}_{2}^{T}\left(  \boldsymbol{Y}%
_{j}-\boldsymbol{\mu}-\boldsymbol{\delta}\right)  \right)  \right\}  .
\end{align*}
Therefore, the relationship between $\mathcal{\ell}\left(  \boldsymbol{\delta
}_{0}\right)  $ and the Kullback-Leibler divergence is%
\begin{equation}
\mathcal{\ell}\left(  \boldsymbol{\delta}_{0}\right)  =2ND_{Kullback}%
(\boldsymbol{U},\widetilde{\boldsymbol{P}}). \label{5.6}%
\end{equation}
Based on (\ref{5.6}) the family of empirical phi-divergence test statistics
are defined as%
\[
T_{\phi}\left(  \boldsymbol{\delta}_{0}\right)  =\frac{2N}{\phi^{\prime\prime
}(1)}D_{\phi}(\boldsymbol{U},\widetilde{\boldsymbol{P}}),
\]
with
\[
D_{\phi}(\boldsymbol{U},\widetilde{\boldsymbol{P}})=%
{\textstyle\sum\limits_{i=1}^{m}}
\widetilde{p}_{i}\frac{m}{m+n}\phi\left(  \frac{\frac{1}{N}}{\widetilde{p}%
_{i}\frac{m}{m+n}}\right)  +%
{\textstyle\sum\limits_{j=1}^{n}}
\frac{n}{m+n}\widetilde{q}_{j}\phi\left(  \frac{\frac{1}{N}}{\frac{n}%
{m+n}\widetilde{q}_{j}}\right)  .
\]
Therefore the expression of $T_{\phi}\left(  \boldsymbol{\delta}_{0}\right)  $
is%
\begin{align}
T_{\phi}\left(  \boldsymbol{\delta}_{0}\right)   &  =\frac{2}{\phi
^{\prime\prime}(1)}\left\{
{\textstyle\sum\limits_{i=1}^{m}}
m\widetilde{p}_{i}\phi\left(  \frac{1}{m\widetilde{p}_{i}}\right)  +%
{\textstyle\sum\limits_{j=1}^{n}}
n\widetilde{q}_{j}\phi\left(  \frac{1}{n\widetilde{q}_{j}}\right)  \right\}
\nonumber\\
&  =\frac{2}{\phi^{\prime\prime}(1)}\left\{
{\textstyle\sum\limits_{i=1}^{m}}
\frac{1}{1+\widetilde{\boldsymbol{\lambda}}_{1}^{T}(\boldsymbol{X}%
_{i}-\boldsymbol{\mu})}\phi\left(  1+\widetilde{\boldsymbol{\lambda}}_{1}%
^{T}(\boldsymbol{X}_{i}-\boldsymbol{\mu})\right)  \right. \nonumber\\
&  \left.  +%
{\textstyle\sum\limits_{j=1}^{n}}
\frac{1}{1+\widetilde{\boldsymbol{\lambda}}_{2}^{T}\left(  \boldsymbol{Y}%
_{j}-\boldsymbol{\mu}-\boldsymbol{\delta}\right)  }\phi\left(
1+\widetilde{\boldsymbol{\lambda}}_{2}^{T}\left(  \boldsymbol{Y}%
_{j}-\boldsymbol{\mu}-\boldsymbol{\delta}_{0}\right)  \right)  \right\}  .
\label{5.7}%
\end{align}

A result similar to the one given in Lemma 1 for the $k$-dimensional case is
\[
\widetilde{\boldsymbol{\mu}}=\left(  m\boldsymbol{\Sigma}_{1}^{-1}%
+n\boldsymbol{\Sigma}_{2}^{-1}\right)  ^{-1}\left(  m\boldsymbol{\Sigma}%
_{1}^{-1}\overline{\boldsymbol{X}}+n\boldsymbol{\Sigma}_{2}^{-1}\left(
\overline{\boldsymbol{Y}}-\boldsymbol{\delta}\right)  \right)  +O_{p}%
(\boldsymbol{1}_{k}),
\]
where $\overline{\boldsymbol{X}}=\frac{1}{m}\sum_{i=1}^{m}\boldsymbol{X}_{i}%
$\ and $\overline{\boldsymbol{Y}}=\frac{1}{n}\sum_{j=1}^{n}\boldsymbol{Y}_{i}%
$. Finally, based in this result it is possible to establish%
\[
T_{\phi}\left(  \boldsymbol{\delta}_{0}\right)  \underset{n,m\rightarrow
\infty}{\overset{\mathcal{L}}{\rightarrow}}\chi_{k}^{2}.
\]

\subsection{Extension of the test-statistic using the R\'{e}nyi's divergence}

R\'{e}nyi (1961) introduced the R\'{e}nyi's divergence measure as an extension
of the Kullback-Leibler divergence. Unfortunately this divergence measure is
not a member of the family of phi-divergence measures considered in this
paper. Men\'{e}ndez et al. (1995, 1997) introduced and studied the
(h,phi)-divergence measures in order to have a family of divergence measures
in which the phi-divergence measures as well as the R\'{e}nyi divergence
measure are included. But not only the R\'{e}nyi divergence measure is
included in this new family but another important divergence measures not
include in the family of phi-divergence measures are included. For more
details about the different divergence measures included in the
(h,phi)-divergence see for instance, Pardo (2006). Based on the
(h,phi)-divergence measures between the probability vectors $\boldsymbol{U}$
and $\boldsymbol{P}$, defined in (\ref{9}) and (\ref{10}) respectively, we can
consider the following family of empirical (h,phi)-divergence test statistics
for the two-sample problem considered in (\ref{1})
\begin{equation}
T_{\phi}^{h}\left(  \delta_{0}\right)  =h\left(  T_{\phi}\left(  \delta
_{0}\right)  \right)  ,\label{a}%
\end{equation}
where $h$ is a differentiable increasing function from $\left[  0,\infty
\right)  $ onto $\left[  0,\infty\right)  $ with $h(0)=0$ and $h^{\prime
}(0)>0$. If we consider
\[
h(x)=\frac{1}{a(a-1)}\log\left(  a(a-1)x+1\right)  ,\text{\ }a\neq0,1
\]
in (\ref{a}), and
\[
\phi\left(  x\right)  =\frac{x^{a}-a\left(  x-1\right)  -1}{a(a-1)},\text{
}a\neq0,1,
\]
we get
\begin{align*}
T_{R\acute{e}nyi}\left(  \delta_{0}\right)   &  =T_{\phi}^{h}\left(
\delta_{0}\right)  =\frac{1}{a(a-1)}\log\left(  a(a-1)T_{\phi}\left(
\delta_{0}\right)  +1\right)  \\
&  =\frac{1}{a(a-1)}\log\left\{
{\textstyle\sum\limits_{i=1}^{m}}
\left(  1+\widetilde{\lambda}_{1}\left(  x_{i}-\widetilde{\mu}\right)
\right)  ^{a-1}+%
{\textstyle\sum\limits_{j=1}^{n}}
\left(  1+\widetilde{\lambda}_{2}\left(  y_{j}-\left(  \widetilde{\mu}%
+\delta_{0}\right)  \right)  \right)  ^{a-1}\right\}  ,
\end{align*}
i.e., the empirical R\'{e}nyi's divergence test statistics for testing
(\ref{1}). For $a=0$ and $a=1$, we get
\[
\lim_{a\rightarrow0}T_{R\acute{e}nyi}\left(  \delta_{0}\right)
=2ND_{Kullback}\left(  \boldsymbol{U},\boldsymbol{P}\right)
\]
and
\[
\lim_{a\rightarrow1}T_{R\acute{e}nyi}\left(  \delta_{0}\right)
=2ND_{Kullback}\left(  \boldsymbol{P,U}\right)  .
\]
It is clear that
\[
h\left(  T_{\phi}\left(  \delta_{0}\right)  \right)  =h(0)+h^{\prime
}(0)T_{\phi}\left(  \delta_{0}\right)  +o\left(  T_{\phi}\left(  \delta
_{0}\right)  \right)  .
\]
Therefore%
\[
\frac{T_{\phi}^{h}\left(  \delta_{0}\right)  }{h^{\prime}(0)}%
\underset{n,m\rightarrow\infty}{\overset{\mathcal{L}}{\rightarrow}}\chi
_{1}^{2}.
\]
In the same way can be established for the problem considered in (\ref{5.1})
that
\[
\frac{T_{\phi}^{h}\left(  \boldsymbol{\delta}_{0}\right)  }{h^{\prime}%
(0)}\underset{n,m\rightarrow\infty}{\overset{\mathcal{L}}{\rightarrow}}%
\chi_{k}^{2},
\]
where
\[
T_{\phi}^{h}\left(  \boldsymbol{\delta}_{0}\right)  =h(T_{\phi}\left(
\boldsymbol{\delta}_{0}\right)  )
\]
with $T_{\phi}\left(  \boldsymbol{\delta}_{0}\right)  $ defined in (\ref{5.7}).

\textbf{Acknowledgement. }This research is partially supported by Grants
MTM2012-33740 from Ministerio de Economia y Competitividad (Spain).

\section*{Appendix}

\subsection{Proof of Lemma \ref{Lem}\label{Ap1}}

In a similar way as in Hall and Scala (1990), we can establish
\[
\lambda_{1}=\lambda_{1}\left(  \mu\right)  =-\sigma_{1}^{-2}\left(
\overline{X}-\mu\right)  +O_{p}(m^{-1})\text{ and }\lambda_{2}=\lambda
_{2}\left(  \mu\right)  =-\sigma_{2}^{-2}\left(  \overline{Y}-\mu-\delta
_{0}\right)  +O_{p}(n^{-1}).
\]
Now applying that
\[
m\lambda_{1}\left(  \mu\right)  +n\lambda_{2}\left(  \mu\right)  =0,
\]
we have%
\begin{equation}
m\sigma_{1}^{-2}\left(  \overline{X}-\mu\right)  +O_{p}(1)+n\sigma_{2}%
^{-2}\left(  \overline{Y}-\mu-\delta_{0}\right)  +O_{p}(1)=0. \label{14}%
\end{equation}
Solving the equation for $\mu$ we have the enunciated result.

\subsection{Proof of Theorem \ref{Th1}\label{Ap2}}

First we are going to establish
\begin{equation}
\frac{2}{\phi^{\prime\prime}\left(  1\right)  }%
{\textstyle\sum\limits_{i=1}^{m}}
mp_{i}\phi\left(  \frac{1}{mp_{i}}\right)  =m\left(  \frac{\overline{X}-\mu
}{\sigma_{1}}\right)  ^{2}+o_{p}(1) \label{15}%
\end{equation}%
\begin{equation}
\frac{2}{\phi^{\prime\prime}\left(  1\right)  }%
{\textstyle\sum\limits_{j=1}^{n}}
nq_{j}\phi\left(  \frac{1}{nq_{j}}\right)  =n\left(  \frac{\overline{Y}%
-\mu-\delta_{0}}{\sigma_{2}}\right)  ^{2}+o_{p}(1). \label{16}%
\end{equation}
If we denote $W_{i}=\lambda_{1}\left(  \mu\right)  \left(  X_{i}-\mu\right)  $
we have $\phi\left(  \frac{1}{mp_{i}}\right)  =\phi\left(  1+W_{i}\right)  $.
A Taylor expansion gives
\[
\phi\left(  1+W_{i}\right)  =\phi\left(  1\right)  +\phi^{\prime}\left(
1\right)  W_{i}+\frac{1}{2}\phi^{\prime\prime}\left(  1\right)  W_{i}%
^{2}+o\left(  W_{i}^{2}\right)  .
\]
On the other hand
\[
mp_{i}=\frac{1}{1+\lambda_{1}\left(  \mu\right)  \left(  X_{i}-\mu\right)
}=\frac{1}{1+W_{i}}=1-W_{i}+W_{i}^{2}+o\left(  W_{i}^{2}\right)  .
\]
Then%
\begin{align*}
\frac{2}{\phi^{\prime\prime}\left(  1\right)  }%
{\textstyle\sum\limits_{i=1}^{m}}
mp_{i}\phi\left(  \frac{1}{mp_{i}}\right)   &  =\frac{2}{\phi^{\prime\prime
}\left(  1\right)  }%
{\textstyle\sum\limits_{i=1}^{m}}
\left(  1-W_{i}+W_{i}^{2}+o\left(  W_{i}^{2}\right)  \right)  \left(  \frac
{1}{2}\phi^{\prime\prime}\left(  1\right)  W_{i}^{2}+o\left(  W_{i}%
^{2}\right)  \right) \\
&  =\frac{2}{\phi^{\prime\prime}\left(  1\right)  }\left(  \frac{1}{2}%
\phi^{\prime\prime}\left(  1\right)
{\textstyle\sum\limits_{i=1}^{m}}
W_{i}^{2}+%
{\textstyle\sum\limits_{i=1}^{m}}
o\left(  W_{i}^{2}\right)  -\frac{1}{2}\phi^{\prime\prime}\left(  1\right)
{\textstyle\sum\limits_{i=1}^{m}}
W_{i}^{3}\right. \\
&  \left.  -%
{\textstyle\sum\limits_{i=1}^{m}}
o\left(  W_{i}^{2}\right)  W_{i}+\frac{1}{2}\phi^{\prime\prime}\left(
1\right)
{\textstyle\sum\limits_{i=1}^{m}}
W_{i}^{4}+%
{\textstyle\sum\limits_{i=1}^{m}}
o\left(  W_{i}^{2}\right)  W_{i}^{2}\right. \\
&  \left.  +\frac{1}{2}\phi^{\prime\prime}\left(  1\right)
{\textstyle\sum\limits_{i=1}^{m}}
o\left(  W_{i}^{2}\right)  W_{i}^{2}+%
{\textstyle\sum\limits_{i=1}^{m}}
o\left(  W_{i}^{2}\right)  o\left(  W_{i}^{2}\right)  \right) \\
&  =%
{\textstyle\sum\limits_{i=1}^{m}}
W_{i}^{2}+\frac{2}{\phi^{\prime\prime}\left(  1\right)  }%
{\textstyle\sum\limits_{i=1}^{m}}
o\left(  W_{i}^{2}\right)  -%
{\textstyle\sum\limits_{i=1}^{m}}
W_{i}^{3}-%
{\textstyle\sum\limits_{i=1}^{m}}
o\left(  W_{i}^{2}\right)  W_{i}+%
{\textstyle\sum\limits_{i=1}^{m}}
W_{i}^{4}\\
&  +\frac{2}{\phi^{\prime\prime}\left(  1\right)  }%
{\textstyle\sum\limits_{i=1}^{m}}
o\left(  W_{i}^{2}\right)  +%
{\textstyle\sum\limits_{i=1}^{m}}
o\left(  W_{i}^{2}\right)  W_{i}^{2}+\frac{2}{\phi^{\prime\prime}\left(
1\right)  }%
{\textstyle\sum\limits_{i=1}^{m}}
o\left(  W_{i}^{2}\right)  o\left(  W_{i}^{2}\right)  .
\end{align*}
But\medskip\newline$\bullet$ $\frac{2}{\phi^{\prime\prime}\left(  1\right)  }%
{\textstyle\sum\limits_{i=1}^{m}}
o\left(  W_{i}^{2}\right)  =\frac{2}{\phi^{\prime\prime}\left(  1\right)  }%
{\textstyle\sum\limits_{i=1}^{m}}
o\left(  \lambda_{1}^{2}\left(  \mu\right)  \left(  X_{i}-\mu\right)
^{2}\right)  =\frac{2}{\phi^{\prime\prime}\left(  1\right)  }m\frac{1}{m}%
{\textstyle\sum\limits_{i=1}^{m}}
\left(  X_{i}-\mu\right)  ^{2}o\left(  \lambda_{1}^{2}\left(  \mu\right)
\right)  $\newline$\hspace*{2.75cm}=mo_{p}(1)o\left(  O_{p}\left(
m^{-1}\right)  \right)  =o_{p}(1)$, because
\[
\lambda_{1}\left(  \mu\right)  =O_{p}(m^{-1/2})
\]
(see page 220 in Owen (2001)), and
\[
\frac{1}{m}%
{\textstyle\sum\limits_{i=1}^{m}}
\left(  X_{i}-\mu\right)  ^{2}=o_{p}(1)
\]
applying the strong law of large numbers.\medskip\newline$\bullet$
$\left\vert
{\textstyle\sum\limits_{i=1}^{m}}
W_{i}^{3}\right\vert \leq\left\vert \lambda_{1}\left(  \mu\right)
^{3}\right\vert m\frac{1}{m}%
{\textstyle\sum\limits_{i=1}^{m}}
\left\vert X_{i}-\mu\right\vert ^{3}=O_{p}(m^{-3/2})mo\left(  m^{1/2}\right)
=o_{p}(1)$, because
\[
\frac{1}{m}%
{\textstyle\sum\limits_{i=1}^{m}}
\left\vert X_{i}-\mu\right\vert ^{3}=o\left(  m^{1/2}\right)  ,
\]
by Lemma 11.3 in page 218 in Owen (2001).\medskip\newline$\bullet$
$\left\vert
{\textstyle\sum\limits_{i=1}^{m}}
o\left(  W_{i}^{2}\right)  W_{i}\right\vert \leq o\left(  \lambda_{1}%
^{2}\left(  \mu\right)  \right)  \lambda_{1}\left(  \mu\right)  m\frac{1}{m}%
{\textstyle\sum\limits_{i=1}^{m}}
\left\vert X_{i}-\mu\right\vert ^{3}=o\left(  O_{p}(m^{-1}\right)
O_{p}(m^{-1/2})o(m^{3/2})=o_{p}(1)$.\medskip\newline$\bullet$ $\left\vert
{\textstyle\sum\limits_{i=1}^{m}}
W_{i}^{4}\right\vert \leq\left\vert \lambda_{1}\left(  \mu\right)
^{4}\right\vert
{\textstyle\sum\limits_{i=1}^{m}}
\left\vert X_{i}-\mu\right\vert ^{4}\leq O_{p}\left(  m^{-2}\right)
mZ_{m}\frac{1}{m}%
{\textstyle\sum\limits_{i=1}^{m}}
\left\vert X_{i}-\mu\right\vert ^{3}=O_{p}\left(  m^{-2}\right)
mo(m^{1/2})O(m^{1/2})=o_{p}(1),$ because
\[
Z_{m}=\max_{1\leq i\leq m}\left\vert X_{i}-\mu\right\vert =O(m^{1/2})
\]
applying Lemma 11.2 in page 218 in Owen (2001).\medskip\newline$\bullet$
$\left\vert \frac{2}{\phi^{\prime\prime}\left(  1\right)  }%
{\textstyle\sum\limits_{i=1}^{m}}
o\left(  W_{i}^{2}\right)  o\left(  W_{i}^{2}\right)  \right\vert \leq\frac
{2}{\phi^{\prime\prime}\left(  1\right)  }\left\vert \lambda_{1}\left(
\mu\right)  ^{4}\right\vert
{\textstyle\sum\limits_{i=1}^{m}}
\left\vert X_{i}-\mu\right\vert ^{4}=o_{p}(1)$. Therefore%
\begin{align*}
\frac{2}{\phi^{\prime\prime}\left(  1\right)  }%
{\textstyle\sum\limits_{i=1}^{m}}
mp_{i}\phi\left(  \frac{1}{mp_{i}}\right)   &  =%
{\textstyle\sum\limits_{i=1}^{m}}
W_{i}^{2}+o_{p}(1)\\
&  =%
{\textstyle\sum\limits_{i=1}^{m}}
\lambda_{1}^{2}\left(  \mu\right)  (X_{i}-\mu)^{2}+o_{p}(1)\\
&  =\sigma_{1}^{-4}\left(  \overline{X}-\mu\right)  ^{2}m\frac{1}{m}%
{\textstyle\sum\limits_{i=1}^{m}}
(X_{i}-\mu)^{2}+o_{p}(1)\\
&  =m\left(  \frac{\overline{X}-\mu}{\sigma_{1}}\right)  ^{2}+o_{p}(1).
\end{align*}
In a similar way we can get
\[
\frac{2}{\phi^{\prime\prime}\left(  1\right)  }%
{\textstyle\sum\limits_{j=1}^{n}}
nq_{j}\phi\left(  \frac{1}{nq_{j}}\right)  =n\left(  \frac{\overline{Y}%
-\mu-\delta_{0}}{\sigma_{2}}\right)  ^{2}+o_{p}(1).
\]
Therefore,
\begin{align*}
T_{\phi}\left(  \delta_{0}\right)   &  =\frac{2}{\phi^{\prime\prime}%
(1)}\left\{
{\textstyle\sum\limits_{i=1}^{m}}
m\widetilde{p}_{i}\phi\left(  \frac{1}{m\widetilde{p}_{i}}\right)  +%
{\textstyle\sum\limits_{j=1}^{n}}
n\widetilde{q}_{j}\phi\left(  \frac{1}{n\widetilde{q}_{j}}\right)  \right\} \\
&  =m\left(  \frac{\overline{X}-\widetilde{\mu}}{\sigma_{1}}\right)
^{2}+n\left(  \frac{\overline{Y}-\widetilde{\mu}-\delta_{0}}{\sigma_{2}%
}\right)  ^{2}+o_{p}(1).
\end{align*}
Applying (\ref{14}),
\[
n\sigma_{2}^{-2}\left(  \overline{Y}-\widetilde{\mu}-\delta_{0}\right)
=-m\sigma_{1}^{-2}\left(  \overline{X}-\widetilde{\mu}\right)  +O_{p}(1)
\]
and
\begin{align*}
T_{\phi}\left(  \delta_{0}\right)   &  =m\left(  \frac{\overline
{X}-\widetilde{\mu}}{\sigma_{1}}\right)  ^{2}-m\sigma_{1}^{-2}\left(
\overline{X}-\widetilde{\mu}\right)  \left(  \overline{Y}-\widetilde{\mu
}-\delta_{0}\right)  +o_{p}(1)\\
&  =m\left(  \frac{\overline{X}-\widetilde{\mu}}{\sigma_{1}^{2}}\right)
\left(  \overline{X}-\widetilde{\mu}-\overline{Y}+\widetilde{\mu}+\delta
_{0}\right)  +o_{p}(1)\\
&  =m\left(  \frac{\overline{X}-\widetilde{\mu}}{\sigma_{1}^{2}}\right)
\left(  \overline{X}-\overline{Y}+\delta_{0}\right)  +o_{p}(1)\\
&  =-n\sigma_{2}^{-2}\left(  \overline{Y}-\widetilde{\mu}-\delta_{0}\right)
\left(  \overline{X}-\overline{Y}+\delta_{0}\right)  +o_{p}(1).
\end{align*}
From (\ref{muHat}) we have%
\begin{align*}
\overline{Y}-\mu-\delta_{0}  &  =\overline{Y}-\dfrac{\dfrac{m\overline{X}%
}{\sigma_{1}^{2}}+n\dfrac{\left(  \overline{Y}-\delta_{0}\right)  }{\sigma
_{2}^{2}}}{\dfrac{m}{\sigma_{1}^{2}}+\dfrac{n}{\sigma_{2}^{2}}}-\delta_{0}\\
&  =\frac{m}{\sigma_{1}^{2}}\frac{\left(  -\overline{X}+\overline{Y}%
-\delta_{0}\right)  }{\frac{\sigma_{1}^{2}}{m}+\frac{\sigma_{2}^{2}}{n}}.
\end{align*}
Therefore,%
\begin{align*}
T_{\phi}\left(  \delta_{0}\right)   &  =\frac{nm\sigma_{1}^{-2}\left(
\overline{Y}-\overline{X}-\delta_{0}\right)  }{\frac{\sigma_{1}^{2}}{m}%
+\frac{\sigma_{2}^{2}}{n}}(-1)\left(  \overline{X}-\overline{Y}+\delta
_{0}\right)  \sigma_{2}^{-2}\\
&  =\frac{mn}{m+n}\left(  \overline{X}-\left(  \overline{Y}-\delta_{0}\right)
\right)  ^{2}\sigma_{1}^{-2}\sigma_{2}^{-2}\left(  \frac{m\sigma_{1}%
^{-2}+n\sigma_{2}^{-2}}{m+n}\right)  ^{-1}+o_{p}(1)\\
&  =\frac{1}{\frac{m}{m+n}\sigma_{2}^{2}+\frac{n}{m+n}\sigma_{1}^{2}}\frac
{mn}{m+n}\left(  \overline{X}-\left(  \overline{Y}-\delta_{0}\right)  \right)
^{2}+o_{p}(1).
\end{align*}
Now we have,%
\begin{align*}
&  \sqrt{m}\left(  \overline{X}-\mu\right)  \underset{m\rightarrow
\infty}{\overset{\mathcal{L}}{\rightarrow}}\mathcal{N}(0,\sigma_{1}^{2}),\\
&  \sqrt{n}\left(  \overline{Y}-\left(  \mu-\delta_{0}\right)  \right)
\underset{n\rightarrow\infty}{\overset{\mathcal{L}}{\rightarrow}}%
\mathcal{N}(0,\sigma_{2}^{2}).
\end{align*}
and%
\begin{align*}
&  \sqrt{\frac{mn}{m+n}}\left(  \overline{X}-\mu\right)
\underset{n,m\rightarrow\infty}{\overset{\mathcal{L}}{\rightarrow}}%
\mathcal{N}(0,\left(  1-\nu\right)  \sigma_{1}^{2}).\\
&  \sqrt{\frac{mn}{m+n}}\left(  \overline{Y}-\left(  \mu+\delta_{0}\right)
\right)  \underset{n,m\rightarrow\infty}{\overset{\mathcal{L}}{\rightarrow}%
}\mathcal{N}(0,\nu\sigma_{2}^{2}),
\end{align*}
where is such that (\ref{ass}). Hence
\[
\sqrt{\frac{mn}{m+n}}\left(  \overline{X}-\overline{Y}+\delta_{0}\right)
\underset{n,m\rightarrow\infty}{\overset{\mathcal{L}}{\rightarrow}}%
\mathcal{N}(0,\left(  1-\nu\right)  \sigma_{1}^{2}+\nu\sigma_{2}^{2}),
\]
from which is obtained%
\[
\sqrt{\frac{1}{\frac{m}{m+n}\sigma_{2}^{2}+\frac{n}{m+n}\sigma_{1}^{2}}}%
\sqrt{\frac{mn}{m+n}}\left(  \overline{X}-\overline{Y}+\delta_{0}\right)
\underset{n,m\rightarrow\infty}{\overset{\mathcal{L}}{\rightarrow}}%
\mathcal{N}(0,1)
\]
and now the result follows.
\end{document}